\documentclass[runningheads]{llncs}

\usepackage[T1]{fontenc}

\usepackage{graphicx}
\usepackage{subcaption}

\usepackage{amsmath,amssymb,mathtools,stmaryrd}

\usepackage{algorithm2e}
\RestyleAlgo{ruled}
\LinesNumbered
\SetKwComment{Comment}{/* }{ */}
\SetKw{Skip}{continue}
\SetKwFor{Match}{match}{with}{end}

\newcommand{\fail}{\texttt{\upshape Fail}}

\usepackage[hypertexnames=false,bookmarksdepth=2]{hyperref}
\usepackage{color}
\usepackage{tcolorbox}
\usepackage[mathlines]{lineno}

\renewcommand{\|}{\:|\:}
\newcommand{\Bool}{\texttt{Bool}}
\newcommand{\band}{\mathbin{\&}}
\newcommand{\True}{\texttt{True}}
\newcommand{\False}{\texttt{False}}
\newcommand{\ifexprtrail}[1]{\texttt{if}\,#1\,\texttt{then}\,\dots}
\newcommand{\ifexpr}[3]{\texttt{if}\,#1\,\texttt{then}\,#2\,\texttt{else}\,#3}

\newcommand{\myth}{\textsc{Myth}}
\newcommand{\nsync}{\textsc{NSynC}}

\newcommand{\unit}{\langle\rangle}

\newcommand{\fst}{\pi_1}
\newcommand{\snd}{\pi_2}
\newcommand{\pair}[2]{\langle #1,#2 \rangle}

\newcommand{\inl}[1]{\texttt{in}_L^{#1}}
\newcommand{\inr}[1]{\texttt{in}_R^{#1}}
\newcommand{\match}[3]{\delta(#1,\,#2,\,#3)}

\newcommand{\lam}[2]{\lambda (#1).#2}
\newcommand{\app}[2]{#1\,#2}

\newcommand{\UGamma}{\mathrm{\Gamma}}
\newcommand{\UXi}{\mathrm{\Xi}}
\newcommand{\exset}[3]{#1 : \mathrm{Ex}[#2;#3]}
\newcommand{\nref}[5]{\exset{#1}{#2}{#3},#4 \rightsquigarrow_\mathrm{N} #5}
\newcommand{\pref}[4]{\exset{#1}{#2}{#3} \rightsquigarrow_\mathrm{P} #4}
\newcommand{\mguess}[4]{[#1;#2],#3 \rightsquigarrow_\mathrm{M} #4}

\newtcbox{\mybox}{on line,
    sharp corners,
    colback=gray!15,
    colframe=black,
    boxrule=\fboxrule,
    boxsep=0pt,
    top=\fboxsep,
    bottom=\fboxsep,
    left=\fboxsep,
    right=\fboxsep
}
\renewcommand{\boxed}[1]{\mybox{\ensuremath{#1}}}

\spnewtheorem*{sketch}{Proof sketch}{\itshape}{\rmfamily}

\newcommand{\mysection}[1]{\par\textbf{#1.}\hspace{0.3em}}

\begin{document}

\title{NSynC: Normalised Synthesis of Computation}

\author{Zoey Shepherd
\and
Ohad Kammar
\and
Elizabeth Polgreen
}

\authorrunning{Z. Shepherd et al.}

\institute{University of Edinburgh, Edinburgh, UK}

\maketitle

\begin{abstract}

Inductive program synthesis algorithms search a space of programs to find one that meets some specification. Enumerating according to the syntax of a programming language leads to a large search space, and hence slow synthesis, due in large part to \textit{semantic duplication}. A synthesiser may have to evaluate---and reject---multiple semantically identical but syntactically different programs, wasting resources.

To avoid this duplication, we present \nsync{}, a synthesis-by-semantics approach. By enumerating the semantics of the target language directly, we guarantee that each candidate program is semantically unique and that each evaluation of a candidate is meaningful. Specifically, we search the space of normal forms for the simply-typed lambda calculus with sums using a top-down, type-directed synthesis algorithm. Our preliminary results show a geomean speedup of 8.93x on a synthetic benchmark suite over the unrestricted algorithm.

\end{abstract}

\noindent 
\mysection{Introduction} 
Program synthesis is the task of taking a program specification to a concrete program that meets it. Deductive approaches (e.g. \cite{manna1992fundamentals}) are limited in what programs they can produce, so search-based approaches like Counter-Example Guided Inductive Synthesis \cite{solar2006combinatorial} are more common for modern program synthesis \cite{alur2013syntax}. These search-based approaches search over \textit{syntax}. That is, if the syntax of the target language includes the production rule for Booleans $\Bool \Coloneqq \True\|\False\|a\|b\|\Bool\band\Bool\|\neg\Bool$, then a naïve synthesiser will consider the syntactically distinct terms $a\band b$, $b\band a$, $a\band(b\band \True)$, $\neg(\neg (a\band b))$, and so on to be distinct "guesses" for a target program of type \Bool. Of course, all of these programs are semantically identical, but a synthesiser might have to consider each of them individually to be able to discount them all.

Many synthesisers take steps to reduce the impact of semantic duplication. The standard formulation of bottom-up enumeration \cite{udupa2013transit} removes observationally equivalent programs from its pool at each step, preventing duplicates from surviving for more than one iteration and avoiding construction of terms like $a\band(b\band \True)$---but not, say, $a\band b$ and $b\band a$. In Semantics-Guided Synthesis \cite{kim2021semantics}, semantics are provided to the solver as a logical specification, making the process of ruling out duplicates easier but leaving them in the search space.

\mysection{Our Approach} To fully eliminate semantic duplication, an algorithm would search the space of \textit{semantics}, rather than syntax. We propose leveraging Normalisation by Evaluation, which constructs unique normal forms for semantic fragments of languages, to provide this search space. As a proof of concept, we present \nsync{} (Normalised Synthesis of Computation), which synthesises terms of the simply-typed lambda calculus with sums (STLC+) restricted to unique normal forms based on Balat et al. \cite{balat2004extensional}. \nsync{} uses the same top-down, type-directed framework as \myth{} \cite{osera2015type}.

STLC+ is a simple but powerful language fragment with a lot of semantic redundancy. It extends STLC, which has anonymous functions, with pairs, a unit type $1$ with sole value $\unit$, and sum types $\tau_1+\tau_2$ equipped with left and right injections $\inl{\tau_1+\tau_2}t$ and $\inr{\tau_1+\tau_2}t$ and match terms $\match{t}{x_1.t_1}{x_2.t_2}$ analogous to
\begin{align*}
    \texttt{match}\;t\;&\texttt{with} && &&\text{$t$ is the scrutinee} \\
    & \inl{}x_1 \Rightarrow t_1 && \text{where} &&\text{$t_1$ is the left branch, binding $x_1$} \\
    & \inr{}x_2 \Rightarrow t_2 && &&\text{$t_2$ is the right branch, binding $x_2$}.
\end{align*}
We can express Boolean logic using STLC+: type $\Bool \triangleq 1+1$, and terms $\True\triangleq\inl{\Bool}\unit$, $\False\triangleq\inr{\Bool}\unit$, $\ifexpr{t}{t_1}{t_2} \triangleq \match{t}{x_1.t_1}{x_2.t_2}$, and $a\band b \triangleq \ifexpr{a}{b}{\False}$.

\myth{} already eliminates some of the semantic duplication in STLC+ by synthesising only $\beta$-normal, $\eta$-long forms---where every computation is fully simplified and every term is expanded into its maximal structural form. This allows \myth{} to skip terms like $\ifexprtrail{\True}$, but it doesn't fully cover the semantics of match terms, which induce several semantic duplicates like $\ifexpr{a}{t}{t} = t$ or $a\band b=b\band a$.

Balat et al. \cite{balat2004extensional} account for both $\beta$-$\eta$ equivalences---as \myth{} does---and the strong sum extensionality axiom, which is responsible for equalities like above. Their normal forms eliminate redundant branches like $\ifexpr{a}{t}{t}$, and our extension with an ordering on scrutinees chooses between, e.g., $a\band b$ and $b \band a$, giving unique normal forms (proof in Appendix \ref{app:stlc_norm}). These normal forms define a fragment of STLC+ with full expressibility and no semantic duplication, giving an ideal search space for synthesis.

\nsync{}, like \myth{}, iteratively applies a set of synthesis rules. The derivation of these rules is the core of our approach: a direct application of \myth{} to STLC+ gets its rules directly from the language's definition, whereas \nsync{} draws them from the rules on normal forms. One such rule, for synthesising match terms, is
\begin{align*}
    \frac{
        \begin{array}{c}
            \mguess{\UGamma}{\tau_1+\tau_2}{\boxed{c}}{M} \\
            X_1 \triangleq \{ \sigma \cdot [v'/x_1] \mapsto v \| \sigma \mapsto v \in X, M[\sigma] \rightarrow^{*} \inl{\tau_1+\tau_2}(v') \} \\
            X_2 \triangleq \{ \sigma \cdot [v'/x_2] \mapsto v \| \sigma \mapsto v \in X, M[\sigma] \rightarrow^{*} \inr{\tau_1+\tau_2}(v') \} \\
            \forall i \in \{1,2\}.\; X_i \neq \emptyset \land \nref{X_i}{\UGamma,x_i:\tau_i}{\tau}{\boxed{(M \sqsubset)}}{N_i} \\
            \boxed{x_1 \notin FV(N_1) \land x_2 \notin FV(N_2) \implies N_1 \neq N_2}
        \end{array}
    }{
        \nref{X}{\UGamma}{\tau}{\boxed{c}}{\match{M}{x_1.N_1}{x_2.N_2}}
    }
\end{align*}
which begins with producing a scrutinee $M$ of sum type, then separates the example set $X$ accordingly to synthesise branches $N_1$ and $N_2$. The boxes highlight our divergence from \myth{}: the constraints $c$ and $M \sqsubset$ force scrutinees to appear in our chosen order, and the final precondition forces the two branches to be distinct. We show how our other rules differ from \myth{} in Appendix \ref{app:synth-rules}.

The \nsync{} algorithm exhaustively applies the synthesis rules until a solution emerges. To avoid non-termination in any subproblem, we impose size limits on neutral terms---that is, scrutinees and terms of base type---and a limit on the depth of match terms. The user inputs a search procedure over these limits, gradually increasing them until \nsync{} finds a satisfying term.

\mysection{Theoretical Results} We show \textit{bounded correctness} of \nsync{} in Appendix \ref{app:synth-alg}. Specifically, we show \textit{bounded completeness}: if the normal form of a term $t$ does not have any branching argument---a function that branches on its input and is given as an argument to a higher-order function---then \nsync{} can produce that normal form. Since \myth{} has the same restriction, we have that restricting to normal forms maintains solvability. We also show \textit{semantic optimality}: all of \nsync{}'s candidate terms are semantically distinct.

\mysection{Empirical Evaluation} We evaluate \nsync{} on a set of 166 randomly-generated synthetic benchmarks and see a geomean speedup of 8.93x. We compare \nsync{} with \myth{}*, our re-implementation of \myth{}. Figure \ref{fig:results-short} shows, for each of the 104 benchmarks that both algorithms solved, the time taken and the number of match terms enumerated by each algorithm. A further 37 benchmarks were solved by \myth{}* but not \nsync{}, since enforcing an order on scrutinees sometimes increases the number of branches in a program so that the example set is unable to cover them; in future work, we aim to address this problem using a parallel case split construction from Altenkirch et al. \cite{altenkirch2001normalization}.

\begin{figure}[t]
    \centering
    \begin{subfigure}[t]{0.49\textwidth}
        \centering
        \includegraphics[width=\linewidth]{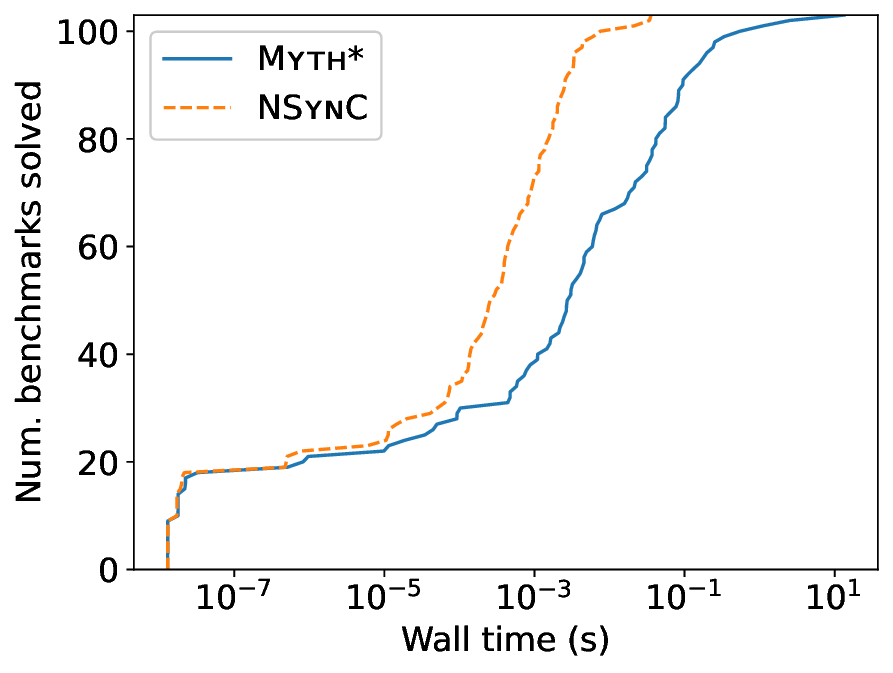}
    \end{subfigure}~\begin{subfigure}[t]{0.49\textwidth}
        \centering
        \includegraphics[width=\linewidth]{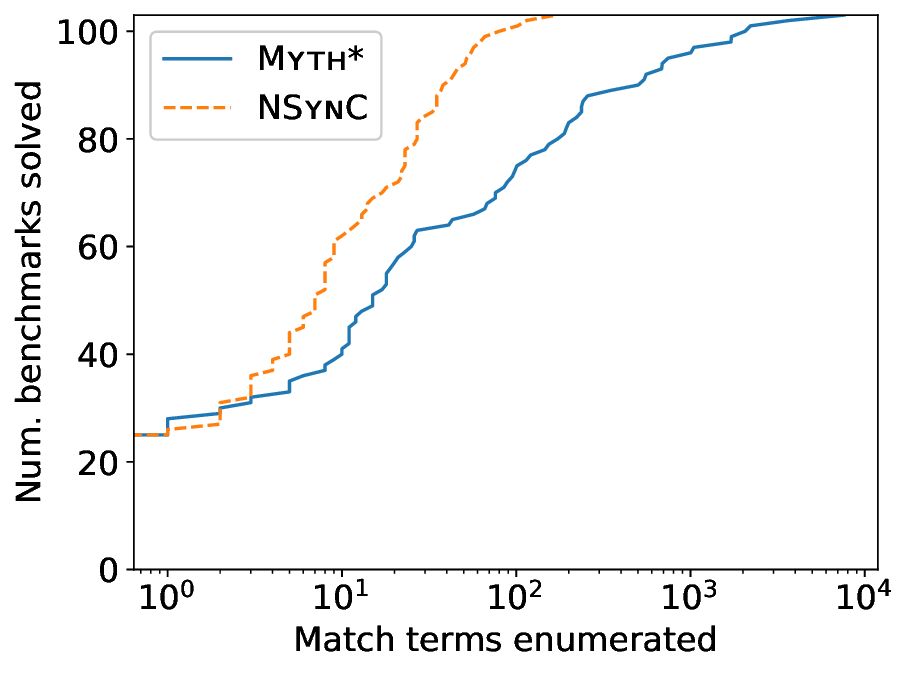}
    \end{subfigure}
    \caption{Log-scaled cactus plot showing the number of benchmarks that could be solved within time (left) and enumeration (right) constraints.}
    \label{fig:results-short}
\end{figure}

\mysection{Future Work} After addressing this limitation, the next step is to bring \nsync{} from a proof of concept to a practical synthesis method. In particular, we aim to extend our analysis to different language fragments that have normal forms, such as STLC with lists \cite{allais2013new}, or fragments corresponding to universal algebras \cite{yallop2018partially}. We then want to explore general-purpose synthesisers as combinations of these specialised, synthesis-by-semantics components.

\begin{credits}
\subsubsection{\ackname} Funding for this research was provided by EPSRC through a PhD studentship within the CDT in Machine Learning Systems hosted in the School of Informatics, University of Edinburgh (EP/Y03516X/1).

\subsubsection{\discintname} The authors have no competing interests to declare that are relevant to the content of this article.
\end{credits}

\bibliographystyle{splncs04}
\bibliography{mybibliography}

\clearpage\appendix

\section{STLC+}\label{app:stlc+}

The simply-typed lambda calculus with sums (STLC+) is the language over the context-free grammar shown in Figure \ref{fig:stlc+grammar}. $\theta_i$ denotes a base type, indexed by $i$. $1$ is the unit type, containing only the value $\unit$. $\tau_1\to\tau_2$ are function types, with $\lam{x:\tau_1}{t}$ as the constructor and $\app{t_1}{t_2}$ as the application of $t_1$ to $t_2$. $\tau_1 \times \tau_2$ are product types, constructed as $\pair{t_1}{t_2}$ and destructed with $\fst$ and $\snd$, which take the first and second element of the pair respectively. $\tau_1 + \tau_2$ are sum types, described in the main paper.

\begin{figure}[t]
    \centering
    \begin{subfigure}[t]{0.5\linewidth}
        \centering
        \begin{equation*}
            \begin{array}{rcll}
                \tau & {} \Coloneqq {} & \theta_i & \text{base types} \\
                &|& 1 & \text{unit type} \\
                &|& \tau\to\tau\quad & \text{function types} \\
                &|& \tau \times \tau & \text{product types} \\
                &|& \tau + \tau & \text{sum types}
            \end{array}
        \end{equation*}
        \begin{equation*}
            \begin{array}{rcl}
                \Bool &=& 1 + 1 \\
                \True &=& \inl{\Bool}\unit \\
                \False &=& \inr{\Bool}\unit \\
                \ifexpr{t}{t_1}{t_2} &=& \match{t}{x_1.t_1}{x_2.t_2}
            \end{array}
        \end{equation*}
    \end{subfigure}\begin{subfigure}[t]{0.5\linewidth}
        \centering
        \begin{equation*}
            \begin{array}{rcll}
                t & {}\Coloneqq{} & x & \text{variables} \\
                &|& \unit & \text{unit} \\
                &|& \lam{x:\tau}{t} & \text{abstraction} \\
                &|& \app{t}{t} & \text{application} \\
                &|& \pair{t}{t} & \text{pairing} \\
                &|& \fst t & \text{first projection} \\
                &|& \snd t & \text{second projection} \\
                &|& \inl{\tau+\tau} t & \text{left injection} \\
                &|& \inr{\tau+\tau} t & \text{right injection} \\
                &|& \match{t}{x.t}{x.t}\quad & \text{match term}
            \end{array}
        \end{equation*}
    \end{subfigure}
    \caption{Grammar for types (upper left) and terms (right) of STLC+, where $i$ ranges over some index set $I$ of base types and $x$ ranges over variables. We also give syntactic sugar (lower left) for Boolean values.}
    \label{fig:stlc+grammar}
\end{figure}

Most formulations of STLC+ (including Balat et al.'s \cite{balat2004extensional}) also have an empty type $0$ which contains no values, with the semantics that finding a term of the empty type is a contradiction from which any other type can be derived. By building our types from the grammar in Figure \ref{fig:stlc+grammar}, we can avoid any risk of this contradiction arising.

This appendix gives the typing rules (Figure \ref{fig:type-rules}), operational semantics (Figure \ref{fig:eval-rules}), and equational theory (Subappendix \ref{app:stlc_equ}) for STLC+.

We write $\UGamma \vdash t:\tau$ for the typing judgement, saying that in typing context $\UGamma$ the term $t$ is well-typed at $\tau$. A typing context $\UGamma \vdash \cdot \| \UGamma,x:\tau$ is simply a list assigning types to free variables, and we assume that the variables in a given context are unique.

We give operational semantics in terms of a relation $t \to^* v$, meaning a term $t$ evaluates to value $v$. Values are simply fully reduced closed terms, given by the grammar $v \Coloneqq a \| \unit \| \lam{x:\tau}{t} \| \pair{v}{v} \| \inl{\tau+\tau}v \| \inr{\tau+\tau}v$, where $a$ stands for any constant and the remaining symbols are the constructors in the term language.

Note that there is no rule for evaluation under a $\lambda$-abstraction. This is because we define equality between function-typed values using extensional equality: when $\cdot \vdash v_1,v_2:\tau'\to\tau$, we have $v_1 = v_2$ if and only if, for all values $\cdot \vdash v:\tau'$, we have $\app{v_1}{v} \to^* w_1$ and $\app{v_2}{v} \to^* w_2$ s.t. $w_1=w_2$.

\begin{figure}[p]
    \centering
    \begin{align*}
        \frac{x:\tau \in \UGamma}{\UGamma \vdash x:\tau}\:\textsc{TypeVar} &&
        \frac{}{\UGamma \vdash \unit:1}\:\textsc{TypeUnit}
    \end{align*}
    \begin{align*}
        \frac{\UGamma,x:\tau_1 \vdash t : \tau}{\UGamma \vdash \lam{x:\tau_1}{t} : \tau_1 \to \tau}\:\textsc{TypeAbs} &&
        \frac{
            \begin{array}{ccc}
                \UGamma \vdash t_1 : \tau_1 \to \tau &\quad & \UGamma \vdash t_2 : \tau_1
            \end{array}
        }{\UGamma \vdash \app{t_1}{t_2}:\tau}\:\textsc{TypeApp}
    \end{align*}
    \begin{align*}
        \frac{
            \begin{array}{ccc}
                \UGamma \vdash t_1 : \tau_1 &\quad& \UGamma \vdash t_2 : \tau_2
            \end{array}
        }{\UGamma \vdash \pair{t_1}{t_2}:\tau_1 \times \tau_2}\:\textsc{TypePair}
    \end{align*}
    \begin{align*}
        \frac{\UGamma \vdash t : \tau_1 \times \tau_2}{\UGamma \vdash \fst t : \tau_1}\:\textsc{TypeFst} &&
        \frac{\UGamma \vdash t : \tau_1 \times \tau_2}{\UGamma \vdash \snd t : \tau_2}\:\textsc{TypeSnd}
    \end{align*}
    \begin{align*}
        \frac{\UGamma \vdash t : \tau_1}{\UGamma \vdash \inl{\tau_1+\tau_2}(t) : \tau_1 + \tau_2}\:\textsc{TypeInL} &&
        \frac{\UGamma \vdash t : \tau_2}{\UGamma \vdash \inr{\tau_1+\tau_2}(t) : \tau_1 + \tau_2}\:\textsc{TypeInR}
    \end{align*}
    \begin{align*}
        \frac{
        \begin{array}{ccccc}
            \UGamma \vdash t : \tau_1 + \tau_2 &\quad& \UGamma,x_1:\tau_1 \vdash t_1 : \tau &\quad& \UGamma,x_2:\tau_2 \vdash t_2 : \tau
        \end{array}
        }{\UGamma \vdash \match{t}{x_1.t_1}{x_2.t_2} : \tau}\:\textsc{TypeMatch}
    \end{align*}
    \caption{Typing rules for STLC+.}
    \label{fig:type-rules}
    
    \centering
    \begin{align*}
        \frac{\cdot \vdash v:\tau}{v \to^* v}\:\textsc{EvalValue} &&
        \frac{
            \begin{array}{ccc}
                t_1 \to^* \lam{x:\tau}{t'} & \quad & t'[t_2/x] \to^* v
            \end{array}
        }{\app{t_1}{t_2} \to^* v}\:\textsc{EvalApp}
    \end{align*}
    \begin{align*}
        \frac{
            \begin{array}{cc}
                t_1 \to^* v_1 & t_1 \to^* v_2
            \end{array}
        }{\pair{t_1}{t_2} \to^* \pair{v_1}{v_2}}\:\textsc{EvalPair}
    \end{align*}
    \begin{align*}
        \frac{t \to^* \pair{v_1}{v_2} }{\fst t \to^* v_1}\:\textsc{EvalFst} &&
        \frac{t \to^* \pair{v_1}{v_2} }{\snd t \to^* v_2}\:\textsc{EvalSnd}
    \end{align*}
    \begin{align*}
        \frac{t \to^* v}{\inl{\tau_1+\tau_2}t \to^* \inl{\tau_1+\tau_2}v}\:\textsc{EvalInL} &&
        \frac{t \to^* v}{\inr{\tau_1+\tau_2}t \to^* \inr{\tau_1+\tau_2}v}\:\textsc{EvalInR}
    \end{align*}
    \begin{align*}
        \frac{
            \begin{array}{ccc}
                t \to^* \inl{\tau_1+\tau_2}(v') &\quad& t_1[v'/x_1] \to^* v
            \end{array}
        }{\match{t}{x_1.t_1}{x_2.t_2} \to^* v}\:\textsc{EvalMatchL}
    \end{align*}
    \begin{align*}
        \frac{
            \begin{array}{ccc}
                t \to^* \inr{\tau_1+\tau_2}(v') &\quad& t_2[v'/x_2] \to^* v
            \end{array}
        }{\match{t}{x_1.t_1}{x_2.t_2} \to^* v}\:\textsc{EvalMatchR}
    \end{align*}
    \caption{Operational semantics of STLC+.}
    \label{fig:eval-rules}
\end{figure}

\subsection{Semantic Equality}\label{app:stlc_equ}

We write $\UGamma \vdash t_1 \simeq_\tau t_2$ to mean that $t_1$ and $t_2$ are semantically equal at type $\tau$ in context $\UGamma$; we may choose to omit the type for brevity. The operational semantics above satisfies this equational theory: if $\UGamma \vdash t_1 \simeq_\tau t_1$, then for all environments $\UGamma \vdash \sigma$ assigning to each variable in $\UGamma$ a value of the appropriate type, $t_1[\sigma] \to^* v_1$ and $t_2[\sigma] \to^* v_2$ s.t. $v_1 = v_2$.

Semantic equality is inductively defined as the reflexive, symmetric, transitive, and congruence closure of the rules shown in Figure \ref{fig:eq-rules}.

\begin{figure}[t]
    \centering
    \begin{align*}
        \frac{\UGamma \vdash t:1}{\UGamma \vdash t \simeq_1 \unit}\:\textsc{EqUnit}
    \end{align*}
    \begin{align*}
        \frac{
            \begin{array}{ccc}
                \UGamma,x:\tau_1 \vdash t:\tau &\quad&
                \UGamma \vdash t_1:\tau_1
            \end{array}
        }{
            \UGamma \vdash \app{(\lam{x:\tau_1}{t})}{t_1} \simeq_\tau t[t_1/x]
        }\:\textsc{EqFunc$\beta$}
    \end{align*}
    \begin{align*}
        \frac{\UGamma \vdash t:\tau_1 \to \tau}{\UGamma \vdash \lam{x:\tau_1}{(\app{t}{x})} \simeq_{\tau_1\to\tau} t}\:\textsc{EqFunc$\eta$}
    \end{align*}
    \begin{align*}
        \frac{
            \begin{array}{ccc}
                \UGamma \vdash t_1:\tau_1 &\quad& \UGamma \vdash t_2:\tau_2
            \end{array}
        }{\UGamma \vdash \fst \pair{t_1}{t_2} \simeq_{\tau_1} t_1}\:\textsc{EqProd$\beta_1$} &&
        \frac{
            \begin{array}{ccc}
                \UGamma \vdash t_1:\tau_1 &\quad& \UGamma \vdash t_2:\tau_2
            \end{array}
        }{\UGamma \vdash \snd \pair{t_1}{t_2} \simeq_{\tau_2} t_2}\:\textsc{EqProd$\beta_2$}
    \end{align*}
    \begin{align*}
        \frac{\UGamma \vdash t:\tau_1 \times \tau_2}{\UGamma \vdash \pair{\fst t}{\snd t} \simeq_{\tau_1\times\tau_2} t}\:\textsc{EqProd$\eta$}
    \end{align*}
    \begin{align*}
        \frac{
            \begin{array}{ccccc}
                \UGamma \vdash t:\tau_1 &\quad& \UGamma,x_1:\tau_1 \vdash t_1:\tau &\quad& \UGamma,x_2:\tau_2 \vdash t_2:\tau
            \end{array}
        }{\UGamma \vdash \match{\inl{\tau_1+\tau_2}t}{x_1.t_1}{x_2.t_2} \simeq_\tau t_1[t/x_1]}\:\textsc{EqSum$\beta_L$}
    \end{align*}
    \begin{align*}
        \frac{
            \begin{array}{ccccc}
                \UGamma \vdash t:\tau_2 &\quad& \UGamma,x_1:\tau_1 \vdash t_1:\tau &\quad& \UGamma,x_2:\tau_2 \vdash t_2:\tau
            \end{array}
        }{\UGamma \vdash \match{\inr{\tau_1+\tau_2}t}{x_1.t_1}{x_2.t_2} \simeq_\tau t_2[t/x_1]}\:\textsc{EqSum$\beta_R$}
    \end{align*}
    \begin{align*}
        \frac{
            \begin{array}{ccc}
                \UGamma \vdash t:\tau_1 + \tau_2 &\quad& \UGamma,x:\tau_1+\tau_2 \vdash t':\tau
            \end{array}
        }{\UGamma \vdash \match{t}{x_1.t'[\inl{\tau_1+\tau_2}x_1/x]}{x_2.t'[\inr{\tau_1+\tau_2}x_2/x]} \simeq_\tau t'[t/x] }\:\textsc{EqSSEA}
    \end{align*}
    \caption{An equational theory for STLC+.}
    \label{fig:eq-rules}
\end{figure}

The first rule \textsc{EqUnit} is simply that the unit type only contains one value. The rest are familiar $\beta$-$\eta$ laws for each type, except for \textsc{EqSSEA}, the Strong Sum Extensionality Axiom \cite{balat2004extensional}. \textsc{EqSSEA} generalises the $\eta$ law for sum types (recovered by setting $t' = x$) with the added property that execution contexts distribute over match terms.

\begin{example}
We define the following syntactic sugar:
\begin{equation*}
    \begin{array}{rcl}
        \Bool &=& 1 + 1 \\
        \True &=& \inl{\Bool}\unit \\
        \False &=& \inr{\Bool}\unit \\
        \ifexpr{t}{t_1}{t_2} &=& \match{t}{x_1.t_1}{x_2.t_2} \\
        t_1\&t_2 &=& \ifexpr{t_1}{t_2}{\False}
    \end{array}
\end{equation*}
and show that the axioms above derive $a:\Bool,b:\Bool \vdash  a \& b \simeq_\Bool b\& a$:
\begin{align*}
    a\&b &= \match{a}{x_1.b}{x_2.\inr{\Bool}\unit} \\
    &= \match{a}{x_1.\match{b}{y_1.\inl{\Bool}y_1}{y_2.\inr{\Bool}y_2}}{x_2.\match{b}{y_1.\inr{\Bool}\unit}{y_2.\inr{\Bool}\unit}} \\
    &= \match{a}{x_1.\match{b}{y_1.\inl{\Bool}x_1}{y_2.\inr{\Bool}\unit}}{x_2.\match{b}{y_1.\inr{\Bool}x_2}{y_2.\inr{\Bool}\unit}}
\end{align*}
The first equality is just desugaring, and the second is by two applications of \textsc{EqSSEA}; one to $\eta$-expand the left branch, the other to introduce on the right branch a match with identical branches. We then apply \textsc{EqUnit} to re-label the unit-valued terms, and that allows us to set $t' = \match{b}{y_1.x}{y_2.\inr{\Bool}\unit}$ and apply \textsc{EqSSEA} once more to arrive at our result:
\begin{align*}
    \match{a}{x_1.t'[\inl{\Bool}x_1/x]}{x_2.t'[\inr{\Bool}x_2/x]} &= t'[a/x] \\
    &= \match{b}{y_1.a}{y_2.\inr{\Bool}\unit} \\
    &= b \& a
\end{align*}
\end{example}

\section{Normal Forms of STLC+}\label{app:stlc_norm}

We begin by presenting the normal forms of STLC+ given by Balat et al. \cite{balat2004extensional}. As noted in their paper, these normal forms are not unique. For example,
\begin{align*}
    \ifexpr{a}{(\ifexpr{b}{\True}{\False})}{\False}, &&\text{and} \\
    \ifexpr{b}{(\ifexpr{a}{\True}{\False})}{\False}.
\end{align*}
are both normal forms of $a \band b$. They suggest that fixing an ordering on scrutinees would give unique normal forms; we present our ordering in Subappendix \ref{app:ordering} and show in Subappendix \ref{app:uniqueness} that it does, in fact, give uniqueness.

The grammar of normal forms is given in Figure \ref{fig:norm-grammar}, where the semantics of the terminal symbols are unchanged; note that this, and that $P \Coloneqq M$ only at base type, already gives $\beta$-long, $\eta$-short forms. $M$ are neutral terms, those where computation is stuck at a free variable, $P$ are pure normal terms, and $N$ are normal terms. What we call neutral terms are referred to by Balat et al. as "pure neutral", distinguishing from "impure" neutral terms which can include match statements. However, with $N$ as the start symbol, there is no way to reach an "impure" neutral in the grammar, so we do not need to consider them.

We write $\UGamma \vdash_\mathrm{M} t:\tau$ to denote that $t$ is a neutral term at type $\tau$ in context $\UGamma$, and similarly for $\UGamma \vdash_\mathrm{P} \dots$ and $\UGamma \vdash_\mathrm{N} \dots$. The complete rules for normal forms \cite[Figure~3]{balat2004extensional} are given in Figure \ref{fig:norm-rules}; most are identical to the corresponding typing rules, with the exception of \textsc{NFNeutral}---which is restricted to base type---and \textsc{NFAbs*} and \textsc{NFMatch*}.

\begin{figure}[p]
    \centering
    \begin{align*}
        \tau &\Coloneqq \theta_i\|1\|\tau\to\tau\|\tau\times\tau\|\tau+\tau \\
        M &\Coloneqq x\|\app{M}{P}\|\fst M\|\snd M \\
        P &\Coloneqq M\|\unit\|\lam{x:\tau}{N}\|\pair{P}{P}\|\inl{\tau+\tau}P\|\inr{\tau+\tau}P \\
        N &\Coloneqq P\|\match{M}{x.N}{x.N}
    \end{align*}
    \caption{Grammar for normal forms of STLC+ \cite{balat2004extensional}.}
    \label{fig:norm-grammar}
    
    \centering
    \begin{align*}
        \frac{x:\tau \in \UGamma}{\UGamma \vdash_\mathrm{M} x:\tau}\:\textsc{NFVar} &&
        \frac{
            \begin{array}{ccc}
                \UGamma \vdash_\mathrm{M} M : \tau_1 \to \tau &\quad & \UGamma \vdash_\mathrm{P} P : \tau_1
            \end{array}
        }{\UGamma \vdash_\mathrm{M} \app{M}{P}:\tau}\:\textsc{NFApp}
    \end{align*}
    \begin{align*}
        \frac{\UGamma \vdash_\mathrm{M} M : \tau_1 \times \tau_2}{\UGamma \vdash_\mathrm{M} \fst M : \tau_1}\:\textsc{NFFst} &&
        \frac{\UGamma \vdash_\mathrm{M} M : \tau_1 \times \tau_2}{\UGamma \vdash_\mathrm{M} \snd M : \tau_2}\:\textsc{NFSnd}
    \end{align*}
    \begin{align*}
        \frac{\exists i.\UGamma \vdash_\mathrm{M} M:\theta_i}{\UGamma \vdash_\mathrm{P} M:\theta_i}\:\textsc{NFNeutral} &&
        \frac{}{\UGamma \vdash_\mathrm{P} \unit:1}\:\textsc{NFUnit}
    \end{align*}
    \begin{equation*}
        \frac{
            \begin{array}{ccc}
                \UGamma,x:\tau_1 \vdash_\mathrm{N} N : \tau &\quad& \forall C \in \text{Guards}(N).x \in FV(C)
            \end{array}
        }{\UGamma \vdash_\mathrm{P} \lam{x:\tau_1}{N}:\tau_1\to\tau}\:\textsc{NFAbs*}
    \end{equation*}
    \begin{align*}
        \frac{
            \begin{array}{ccc}
                \UGamma \vdash_\mathrm{P} P_1 : \tau_1 &\quad& \UGamma \vdash_\mathrm{P} P_2 : \tau_2
            \end{array}
        }{\UGamma \vdash_\mathrm{P} \pair{P_1}{P_2}:\tau_1 \times \tau_2}\:\textsc{NFPair}
    \end{align*}
    \begin{align*}
        \frac{\UGamma \vdash_\mathrm{P} P : \tau_1}{\UGamma \vdash_\mathrm{P} \inl{\tau_1+\tau_2}(P) : \tau_1 + \tau_2}\:\textsc{NFInL} &&
        \frac{\UGamma \vdash_\mathrm{P} P : \tau_2}{\UGamma \vdash_\mathrm{P} \inr{\tau_1+\tau_2}(P) : \tau_1 + \tau_2}\:\textsc{NFInR}
    \end{align*}
    \begin{align*}
        \frac{\UGamma \vdash_\mathrm{P} P : \tau}{\UGamma \vdash_\mathrm{N} P:\tau}\:\textsc{NFPure}
    \end{align*}
    \begin{align*}
        \frac{
            \begin{array}{c}
                \UGamma \vdash_\mathrm{M} M:\tau_1+\tau_2 \\
                \forall i \in \{1,2\}.\; \UGamma,x_i:\tau_i \vdash_\mathrm{N} N_i : \tau \land{} \\
                \forall C \in \text{Guards}(x_i.N_i).M \neq C\\
                x_1 \notin FV(N_1) \land x_2 \notin FV(N_2) \implies N_1 \neq N_2
            \end{array}
        }{\UGamma \vdash_\mathrm{N} \match{M}{x_1.N_1}{x_2.N_2} : \tau}\:\textsc{NFMatch*}
    \end{align*}
    \begin{align*}
        \text{Guards}(N) &\triangleq \begin{cases}
            \{M\} \cup \bigcup_{i \in \{1,2\}} \text{Guards}(x_i.N_i) & \text{if}\;N=\match{M}{x_1.N_1}{x_2.N_2} \\
            \emptyset & \text{otherwise}
        \end{cases} \\
        \text{Guards}(x_i.N_i) &\triangleq \{ C \in \text{Guards}(N) \| x_i \notin FV(C) \}.
    \end{align*}
    \caption{Normal form rules for STLC+ \cite{balat2004extensional}. $FV(t)$ denotes the free variables that appear in a term $t$.}
    \label{fig:norm-rules}
\end{figure}

\subsection{Ordering of Neutrals}\label{app:ordering}

To get unique normal forms, we replace the guard restriction in \textsc{NFMatch*} with a stronger restriction in terms of an ordering on neutral terms, which we define in this appendix. This also allows us to rephrase the guard restriction in \textsc{NFAbs*} in a similar way.

We write $\UGamma \vdash M_1 \sqsubset M_2$ to mean that $M_1$ is strictly before $M_2$ in context $\UGamma$. Fixing $\UGamma$ gives us a binary relation $\UGamma \vdash - \sqsubset -$ on neutral terms, which is a total ordering. We define an auxiliary total ordering $\UGamma \vdash - \sqsubset_\tau -$ on pure normal terms that share the type $\tau$, and give mutually inductive definitions of the two (Figure \ref{fig:ord-rules}). Since we cannot synthesise neutral terms that contain match terms---a limitation we inherit from \myth{}, see the main paper---we omit the ordering on those here.

We first define some additional notation. $\UGamma_1 \leq \UGamma$ denotes that the typing context $\UGamma_1$ is a prefix of $\UGamma$, that is $\UGamma = \UGamma_1,\UGamma_2$ for some $\UGamma_2$. $\UGamma \vdash t$ denotes that $t$ is well-defined under $\UGamma$, similar to $\UGamma \vdash t:\tau$ but where $\tau$ is irrelevant. $\UGamma \vDash t_1,t_2$ denotes that $t_1$ and $t_2$ are equally defined under $\UGamma$, meaning that for all $\UGamma_1 \leq \UGamma$ we have $\UGamma_1 \vdash t_1 \iff \UGamma_1 \vdash t_2$. $M_1 \sqsubset_d M_2$ is a transitive relation between neutrals based only on the top-level destructor, that is
\begin{equation*}
    x \sqsubset_d \fst t \sqsubset_d \snd t \sqsubset_d \app{t}{t}
\end{equation*}
where $x$ is any variable and each $t$ represents any term.

Both of these relations, with fixed $\UGamma$ and, if applicable, $\tau$, are irreflexive, antisymmetric, transitive, and total. Additionally, extending a context $\UGamma$ with a new variable $x$ preserves the ordering on already-definable terms, places all newly-definable terms above them, and gives $x$ as the least of the newly-definable terms. Using these properties, we can replace the rules \textsc{NFAbs*} and \textsc{NFMatch*} with those shown in Figure \ref{fig:norm-rules-ordered} (changes highlighted), and see that this does not weaken the preconditions.

\begin{figure}
    \centering
    \begin{align*}
        \frac{
            \begin{array}{ccccc}
                \UGamma_1 \leq \UGamma &\quad& \UGamma_1 \vdash M_1 &\quad& \UGamma_1 \not\vdash M_2
            \end{array}
        }{\UGamma \vdash M_1 \sqsubset M_2}\:\textsc{OrdCtx}
    \end{align*}
    \begin{align*}
        \frac{
            \begin{array}{ccc}
                \UGamma \vDash M_1,M_2 &\quad& M_1 \sqsubset_d M_2
            \end{array}
        }{\UGamma \vdash M_1 \sqsubset M_2}\:\textsc{OrdOp}
    \end{align*}
    \begin{align*}
        \frac{\UGamma \vdash M_1 \sqsubset M_2}{\UGamma \vdash \fst M_1 \sqsubset \fst M_2}\:\textsc{OrdFst} &&
        \frac{\UGamma \vdash M_1 \sqsubset M_2}{\UGamma \vdash \snd M_1 \sqsubset \snd M_2}\:\textsc{OrdSnd}
    \end{align*}
    \begin{align*}
        \frac{
            \begin{array}{ccc}
                \UGamma \vDash \app{M_1}{P_1},\app{M_2}{P_2} &\quad& \UGamma \vdash M_1 \sqsubset M_2
            \end{array}
        }{\UGamma \vdash \app{M_1}{P_1}\sqsubset\app{M_2}{P_2}}\:\textsc{OrdAppFunc}
    \end{align*}
    \begin{align*}
        \frac{
            \begin{array}{ccccc}
                \UGamma \vDash \app{M}{P_1},\app{M}{P_2} &\quad& \UGamma \vdash M:\tau_1\to\tau &\quad& \UGamma \vdash P_1 \sqsubset_{\tau_1} P_2
            \end{array}
        }{\UGamma \vdash \app{M}{P_1}\sqsubset\app{M}{P_2}}\:\textsc{OrdAppArg}
    \end{align*}
    \begin{align*}
        \frac{
            \begin{array}{ccccc}
                \UGamma_1 \leq \UGamma &\quad& \UGamma_1 \vdash P_1:\tau &\quad& \UGamma_1 \not\vdash P_2
            \end{array}
        }{\UGamma \vdash P_1 \sqsubset_\tau P_2}\:\textsc{OrdCtxPure}
    \end{align*}
    \begin{align*}
        \frac{
            \begin{array}{ccc}
                \UGamma \vdash M_1,M_2:\theta_i &\quad& \UGamma \vdash M_1 \sqsubset M_2
            \end{array}
        }{\UGamma \vdash M_1 \sqsubset_{\theta_i} M_2}\:\textsc{OrdNeutral}
    \end{align*}
    \begin{align*}
        \frac{x:\tau_1,\UGamma \vdash P_1 \sqsubset_{\tau} P_2}{\UGamma \vdash \lam{x:\tau_1}{P_1} \sqsubset_{\tau_1\to\tau} \lam{x:\tau_1}{P_2}}\:\textsc{OrdAbs}
    \end{align*}
    \begin{align*}
        \frac{
            \begin{array}{ccc}
                \UGamma\vDash\pair{P_{11}}{P_{12}},\pair{P_{21}}{P_{22}} &\quad& \UGamma \vdash P_{11} \sqsubset_{\tau_1} P_{21}
            \end{array}
        }{\UGamma \vdash \pair{P_{11}}{P_{12}}\sqsubset_{\tau_1\times\tau_2}\pair{P_{21}}{P_{22}}}\:\textsc{OrdPair1}
    \end{align*}
    \begin{align*}
        \frac{
            \begin{array}{ccc}
                \UGamma\vDash\pair{P}{P_{12}},\pair{P}{P_{22}} &\quad& \UGamma \vdash P_{12} \sqsubset_{\tau_2} P_{22}
            \end{array}
        }{\UGamma \vdash \pair{P}{P_{12}}\sqsubset_{\tau_1\times\tau_2}\pair{P}{P_{22}}}\:\textsc{OrdPair2}
    \end{align*}
    \begin{align*}
        \frac{
            \UGamma \vDash P_1,P_2
        }{\UGamma \vdash \inl{\tau_1+\tau_2} P_1 \sqsubset_{\tau_1+\tau_2} \inr{\tau_1+\tau_2} P_2}\:\textsc{OrdInjections}
    \end{align*}
    \begin{align*}
        \frac{\UGamma \vdash P_1 \sqsubset_{\tau_1} P_2}{\UGamma \vdash \inl{\tau_1+\tau_2} P_1 \sqsubset_{\tau_1+\tau_2} \inl{\tau_1+\tau_2} P_2}\:\textsc{OrdInL}
    \end{align*}
    \begin{align*}
        \frac{\UGamma \vdash P_1 \sqsubset_{\tau_2} P_2}{\UGamma \vdash \inr{\tau_1+\tau_2} P_1 \sqsubset_{\tau_1+\tau_2} \inr{\tau_1+\tau_2} P_2}\:\textsc{OrdInR}
    \end{align*}
    \caption{Rules for our ordering relations.}
    \label{fig:ord-rules}
\end{figure}

\begin{figure}
    \centering
    \begin{equation*}
        \frac{
            \begin{array}{c}
                \UGamma,x:\tau_1 \vdash_\mathrm{N} N : \tau \\
                \boxed{N = \delta(M,\dots) \implies \UGamma,x:\tau_1 \vdash x \sqsubseteq M}
            \end{array}
        }{\UGamma \vdash_\mathrm{P} \lam{x:\tau_1}{N}:\tau_1\to\tau}\:\textsc{NFAbs}
    \end{equation*}
    \begin{align*}
        \frac{
            \begin{array}{c}
                \UGamma \vdash_\mathrm{M} M:\tau_1+\tau_2 \\
                \forall i \in \{1,2\}.\; \UGamma,x_i:\tau_i \vdash_\mathrm{N} N_i : \tau \land{} \\
                \boxed{N_i = \delta(M_i, \dots) \implies \UGamma,x:\tau_i \vdash M \sqsubset M_i} \\
                x_1 \notin FV(N_1) \land x_2 \notin FV(N_2) \implies N_1 \neq N_2
            \end{array}
        }{\UGamma \vdash_\mathrm{N} \match{M}{x_1.N_1}{x_2.N_2} : \tau}\:\textsc{NFMatch}
    \end{align*}
    \caption{Our modified normal form rules, with our ordering on scrutinees.}
    \label{fig:norm-rules-ordered}
\end{figure}

\subsection{Unique Normal Forms}\label{app:uniqueness}

Balat et al.'s construction gives every term a normal form \cite[Theorem~5.1]{balat2004extensional}. With our described changes, we can show the stronger condition that every term has a \textit{unique} normal form; rather than building directly on Balat et al.'s theorem, we construct our proof by way of Altenkirch et al.'s normal forms \cite{altenkirch2001normalization}.

\begin{theorem}[Uniqueness of Normal Forms]\label{thm:normalisation}
    For every STLC+ term $\UGamma \vdash t:\tau$, there exists a unique normal term $\UGamma \vdash_\mathrm{N} N:\tau$ such that $\UGamma \vdash t \simeq_\tau N$. Therefore, if $\UGamma \vdash_\mathrm{N} N_1,N_2 : \tau$, then $N_1 = N_2 \iff \UGamma \vdash N_1 \simeq_\tau N_2$.
\end{theorem}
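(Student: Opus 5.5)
Existence and uniqueness will be proved separately, with Altenkirch et al.'s normalisation-by-evaluation result \cite{altenkirch2001normalization} as the semantic anchor. Altenkirch et al. interpret STLC+ in a presheaf (sheaf) model over contexts and extract a normal form by reification. Their normal forms use ``parallel'' case splits over a decision tree of neutrals, and two terms are $\simeq$ exactly when their reified forms coincide up to the equivalence of case trees. The plan is to define a translation $\llbracket - \rrbracket$ from our $\vdash_\mathrm{N}$ terms (with \textsc{NFAbs}, \textsc{NFMatch}) into Altenkirch et al.'s normal forms, preserving $\simeq$, and then show that $\llbracket - \rrbracket$ is injective. Everything is done by mutual induction on the grammar $M,P,N$.

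\emph{Existence.} Take Balat et al.'s normal form of $t$ \cite[Theorem~5.1]{balat2004extensional} and rewrite it into one satisfying our ordering conditions. The key tool is a commuting-conversion lemma derived from \textsc{EqSSEA}: for neutrals $M, M'$ with $M' \sqsubset M$, we have
\[
\delta(M, x_1.\delta(M',y_1.A_1,y_2.B_1), x_2.\delta(M',y_1.A_2,y_2.B_2)) \simeq \delta(M', y_1.\delta(M,x_1.A_1,x_2.A_2), y_2.\delta(M,x_1.B_1,x_2.B_2)),
\]
provided $M'$ does not mention $x_i$. This side condition is guaranteed by \textsc{OrdCtx}, because $x_i$ is introduced later. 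The same lemma also gives the collapse $\delta(M,x_1.N,x_2.N)\simeq N$ when $x_i\notin FV(N)$, and the duplicate-guard elimination $\delta(M,x_1.\delta(M,\dots),\dots)\simeq\delta(M,x_1.\dots)$ via \textsc{EqSum$\beta$}. Repeatedly swapping adjacent out-of-order scrutinees, collapsing redundant matches and removing repeated guards amounts to bubble-sorting the decision tree under the total order $\sqsubset$, using the listed properties (totality, and preservation under context extension). Termination follows from a multiset measure on the number of inversions along each branch, together with the size of the tree. For \textsc{NFAbs}, scrutinees not mentioning $x$ are $\sqsubset x$, so they can be floated out of the $\lambda$. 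This requires $\eta$ for functions together with \textsc{EqSSEA} at $t'=\lambda x.\,\cdot$, but in fact such a scrutinee is excluded because $P$-level $\lambda$ appears only inside pure positions. I will instead argue that floating occurs one level up, in the enclosing $N$.

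\emph{Uniqueness.} Given $N_1\simeq N_2$, both normal under our rules, I will show $N_1=N_2$. By soundness of the model, $\llbracket N_1\rrbracket$ and $\llbracket N_2\rrbracket$ denote the same semantic value. The core claim is that an ordered, non-redundant decision tree is determined by its denotation. The topmost scrutinee is the $\sqsubset$-least neutral on which the denotation genuinely depends; this minimality is exactly what \textsc{NFMatch}'s ordering box expresses. The non-redundancy condition $N_1\neq N_2$ then ensures that the dependence is genuine. Given this, induction on the tree depth (then on pure structure, using the $\eta$-long/$\beta$-normal shape of $P$ and $M$) matches the two terms node by node. The second sentence of Theorem~\ref{thm:normalisation} follows immediately: $\Leftarrow$ is uniqueness applied to $t=N_1$, and $\Rightarrow$ is reflexivity.

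The main obstacle is this ``minimal scrutinee is semantically determined'' step. Genuine dependence is subtle: a match on $M$ whose branches differ only through the bound $x_i$ might still be semantically trivial. One example is $\delta(M,x_1.\inl{}x_1,x_2.\inr{}x_2)$ versus $M$, but that case is excluded by the $P\Coloneqq M$-only-at-base restriction. Distinguishing branches in general needs a definability argument: construct environments in the Kripke model that separate $\inl{}$ from $\inr{}$ at $M$ while fixing all $\sqsubset$-smaller neutrals. I would first attempt this through Altenkirch et al.'s sheaf covers, where such environments are the covering sieves.
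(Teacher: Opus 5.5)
\paragraph{Gap: uniqueness.} The gap is the uniqueness half, and that half is the real content of the theorem. Your node-by-node induction rests on the claim that an ordered, non-redundant case tree is determined by its denotation, and you leave exactly that claim open. The ordering box of \textsc{NFMatch} only says that the root scrutinee $M$ lies below the other scrutinees that actually occur in the tree. Identifying $M$ as ``the least neutral the denotation depends on'' is therefore the lemma itself, not a restatement of the box. The induction needs two facts.
\begin{itemize}
\item[(i)] If $\match{M}{x_1.N_1}{x_2.N_2}\simeq N'$ with $N'$ normal, then $N'$ also splits on $M$ at its root. So a normal form that never scrutinises a sum-typed neutral must be independent of it. A split whose branches differ, or use $x_i$ (where, as you note, the side condition $N_1\neq N_2$ says nothing), must genuinely depend on it.
\item[(ii)] Cancellation: $\match{M}{x_1.A_1}{x_2.A_2}\simeq\match{M}{x_1.B_1}{x_2.B_2}$ implies $\UGamma,x_i:\tau_i\vdash A_i\simeq B_i$. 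Semantically you only learn that $A_1$ and $B_1$ agree wherever $M$ is $\inl{}x_1$. Dropping that hypothesis needs the fact that constraining $M$ induces no equations between terms that do not mention $M$.
\end{itemize}
Both are independence properties of distinct neutrals, and both fail in individual closed environments. For example, with $f:\theta\to 1+1$, any environment sending $a$ and $b$ to the same value correlates $\app{f}{a}$ with $\app{f}{b}$. So these facts cannot be read off the operational semantics. They must come from the generic structure of a Kripke, sheaf or term model, which is precisely the part you defer. The same genericity is used silently in your ``pure structure'' step, e.g.\ to get $P_1\simeq P_2$ from $\app{M}{P_1}\simeq\app{M}{P_2}$.

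\paragraph{Comparison with the paper.} The paper's one-line proof puts all of this in one place. Altenkirch et al.'s reification, with scrutinees chosen in $\sqsubset$-order, yields a term of the ordered grammar that is $\simeq t$ and depends only on the $\simeq$-class of $t$; this gives existence. Uniqueness additionally needs that reification reproduces every term of the ordered grammar, $\mathrm{reify}(\mathrm{eval}\,N)=N$. Equivalently, the map sending each term of our grammar to its Altenkirch et al.\ normal form must be injective. So your choice of direction for $\llbracket-\rrbracket$ is the right one, and this property is the precise form of your missing lemma.

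\paragraph{Problems with your existence route.} The same reification mechanism gives existence directly, so your Balat-plus-sorting route is unnecessary, and as stated it has its own problem. Swapping two adjacent matches reorders their bound variables: $\UGamma,x_1,y_1$ becomes $\UGamma,y_1,x_1$. Since $\sqsubset$ depends on the context through \textsc{OrdCtx}, an already-sorted pair deeper down, such as $\app{g}{x_1}\sqsubset\app{h}{y_1}$, flips. Your inversion count therefore need not decrease. Instead, split top-down on the $\sqsubset$-least guard, substitute $\inl{}x_1$ or $\inr{}x_2$ for it (via \textsc{EqSSEA} and $\beta$), and recurse. This terminates by size and avoids the problem.

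Two smaller points:
\begin{itemize}
\item Your \textsc{NFAbs} paragraph is self-contradictory but also unneeded. Balat et al.'s \textsc{NFAbs*} already forces every guard of the body to mention $x$, so after sorting the root scrutinee is automatically $\sqsupseteq x$ and nothing needs floating.
\item The totality of $\sqsubset$ you invoke is only available for neutrals without match subterms. The paper omits the order on those, and \textsc{OrdAbs} only compares pure bodies. Scrutinees with branching arguments are therefore covered neither by your sorting nor by your notion of least neutral.
\end{itemize}
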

\begin{proof}
    Follows from Altenkirch et al.'s results \cite{altenkirch2001normalization} and that their reification function \cite[Definition~3.2]{altenkirch2001normalization}, when scrutinees are chosen by our ordering, is an injective mapping from their normal forms to ours. \qed
\end{proof}

\begin{corollary}
    If $\UGamma \vdash_\mathrm{P} P_1,P_2 : \tau$, then $P_1=P_2$ if and only if $\UGamma \vdash P_1 \simeq_\tau P_2$. If $\UGamma \vdash_\mathrm{M} M_1,M_2 : \tau$, then $M_1=M_2$ if and only if $\UGamma \vdash M_1 \simeq_\tau M_2$.
\end{corollary}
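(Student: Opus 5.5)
The plan is to derive both clauses from Theorem~\ref{thm:normalisation}. In each case the direction ``$=$ implies $\simeq$'' is immediate from reflexivity of $\simeq_\tau$. The real content is the direction ``$\simeq$ implies $=$''.

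\emph{Pure clause.} Suppose $\UGamma \vdash_\mathrm{P} P_1,P_2:\tau$. By \textsc{NFPure} we also have $\UGamma \vdash_\mathrm{N} P_1,P_2:\tau$. The second sentence of Theorem~\ref{thm:normalisation} then gives $P_1 = P_2 \iff \UGamma \vdash P_1 \simeq_\tau P_2$ directly.

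\emph{Neutral clause.} A neutral is only a normal term at base type, so I do not apply the theorem to $M_i$ directly. Instead I define a type-directed $\eta$-expansion $\eta_\tau(M)$, by recursion on $\tau$:
\begin{itemize}
\item $\eta_{\theta_i}(M)=M$ and $\eta_1(M)=\unit$;
\item $\eta_{\tau_1\times\tau_2}(M)=\pair{\eta_{\tau_1}(\fst M)}{\eta_{\tau_2}(\snd M)}$;
\item $\eta_{\tau_1\to\tau_2}(M)=\lam{x:\tau_1}{\eta_{\tau_2}(\app{M}{\eta_{\tau_1}(x)})}$ with $x$ fresh;
\item $\eta_{\tau_1+\tau_2}(M)=\match{M}{x_1.\inl{}\eta_{\tau_1}(x_1)}{x_2.\inr{}\eta_{\tau_2}(x_2)}$.
\end{itemize}
I then establish two facts, each by simultaneous induction on $\tau$.
\begin{enumerate}
\item $\UGamma \vdash_\mathrm{N} \eta_\tau(M):\tau$. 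Here I must check the side conditions of \textsc{NFAbs} and \textsc{NFMatch}. The fresh $x$ is the least newly definable neutral, so it satisfies $x \sqsubseteq$ any head scrutinee. The two match branches differ because they use different injections.
\item $\UGamma \vdash \eta_\tau(M)\simeq_\tau M$. This uses \textsc{EqUnit}, \textsc{EqProd$\eta$}, \textsc{EqFunc$\eta$} and \textsc{EqSSEA} with $t'=x$, together with congruence.
\end{enumerate}
Now suppose $M_1 \simeq_\tau M_2$. By transitivity $\eta_\tau(M_1)\simeq\eta_\tau(M_2)$, and Theorem~\ref{thm:normalisation} gives the syntactic equality $\eta_\tau(M_1)=\eta_\tau(M_2)$.

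It remains to recover $M_1=M_2$ from $\eta_\tau(M_1)=\eta_\tau(M_2)$, which I do by induction on $\tau$. At base type the expansion is the identity. At a sum type the scrutinee of the outer match is $M_i$ itself, so equality is read off immediately. At product and function types I descend into a component. For example, equality of the first components gives $\eta(\fst M_1)=\eta(\fst M_2)$, and the induction hypothesis yields $\fst M_1=\fst M_2$, hence $M_1=M_2$. Injectivity of the syntax constructors makes each step routine.

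The main obstacle is the components whose type is built only from $1$, $\times$ and $\to$ into such types. There the expansion collapses to $\unit$-shaped terms and erases the head. For these components I would first try to choose the descent path through a component of non-unit-like type, where the head survives. If $\tau$ has no such component, I would instead argue directly that the theorem's uniqueness, applied in an extended context, separates distinct heads. This is the step I expect to require the most care, and it is where the precise conventions of the neutral judgement must be pinned down.
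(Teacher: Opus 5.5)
Your pure clause is correct and matches the evident intended argument: the paper gives no separate proof and treats the corollary as a specialisation of Theorem~\ref{thm:normalisation} via \textsc{NFPure}. You are also right that the neutral clause does not follow the same way, since \textsc{NFNeutral} only makes neutrals normal at base type. The gap is in the neutral clause. The case you defer to the end cannot be proved, because the statement is false there. Take $\UGamma = x:1,\,y:1$. By \textsc{NFVar}, $\UGamma \vdash_\mathrm{M} x,y:1$, and \textsc{EqUnit} with symmetry and transitivity gives $\UGamma \vdash x \simeq_1 y$, yet $x \neq y$. The same happens at every unit-like type (built from $1$ by products and by arrows into unit-like codomains). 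For example, $f,g:\theta_i \to 1$ are identified by \textsc{EqFunc$\eta$} and \textsc{EqUnit} under the binder. So no appeal to uniqueness in an extended context can separate such heads. The neutral clause needs the extra hypothesis that $\tau$ is not unit-like; at base type it is immediate from your first clause via \textsc{NFNeutral}. Testing the claim at $\tau = 1$ before setting up the induction would have exposed this.

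Even for non-unit-like $\tau$, your fact~1 is false. The grammar in Figure~\ref{fig:norm-grammar} only admits pure terms inside pairs, injections and application arguments, but $\eta_\tau$ at a sum type is a match. So, for instance, $\eta_{(1+1)\times\theta_i}(M) = \pair{\match{\fst M}{\dots}{\dots}}{\snd M}$ and $\eta_{(1+1)\to\theta_i}(M) = \lam{x:1+1}{\app{M}{\match{x}{\dots}{\dots}}}$ are not normal. You also never check the boxed ordering condition of \textsc{NFMatch} for nested scrutinees. A genuine long normal form has to float matches outward and sort scrutinees by $\sqsubset$, which makes reading off the head much harder.

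A simpler route avoids expansion altogether. Descend through a non-unit-like product component or function codomain until you reach a base or sum type. At each step apply the same projection to both $M_1$ and $M_2$, or the same argument $P$ (any pure normal inhabitant of the domain, which exists once $\UGamma$ is extended by fresh base-type variables). By congruence and weakening, the resulting neutrals $K_1,K_2$ are still semantically equal. At base type they are pure normal by \textsc{NFNeutral}. At a sum type, $\match{K_i}{x_1.\inl{1+1}\unit}{x_2.\inr{1+1}\unit}$ is normal, since its branches differ and contain no nested scrutinee. The theorem then gives syntactic equality, and injectivity of the syntax peels back to $M_1 = M_2$.
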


\section{NSynC}\label{app:nsync}

Since we do not wish to provide concrete functions to our synthesiser as example outputs---this would amount to giving the synthesiser the answer---we extend the definition of values to include \textit{partial functions}. A partial function $\cdot \vdash \tilde{f}:\tau_1\to\tau$ relates distinct values of type $\tau_1$ (the set of which is written $\text{dom}(\tilde{f})$) to values of type $\tau$. Extensional equality relates partial functions to each other and to concrete functions when they agree on all inputs for which both are defined.

The input specification for our synthesis algorithm is a \textit{typed example set} $\exset{X}{\UGamma}{\tau}$, consisting of input-output examples $\sigma \mapsto v$. The input $\sigma$ is an assignment of values of appropriate type to the variables in $\UGamma$. The output is a value $\cdot \vdash v:\tau$ which may only contain partial, not concrete, functions.

We also define \textit{neutral constraints} $c \Coloneqq M \sqsubset \| M \sqsubseteq \| \top$, and write $\UGamma \vdash c(M')$ to mean that $M'$ satisfies $c$ in context $\UGamma$: $\UGamma \vdash (M \sqsubset)(M')$ and $\UGamma \vdash (M \sqsubseteq)(M')$ expand in the obvious way, and $\UGamma \vdash \top(M')$ is always true. Constraints let us place lower bounds on term enumeration, and enforce the ordering on neutrals.

\subsection{Overview}\label{app:synth-overview}

Normal forms give us a duplication-free search space for synthesis. No two normal forms have the same semantics, and yet every possible semantics is represented in some normal form, so any term we could have synthesised without normalisation has an equivalent that we can synthesise now.

Since our target language is typed and purely functional, the type-directed synthesis algorithm \myth{} \cite{osera2015type} is a natural choice for synthesising it. Our synthesis begins with a top-down, recursive process of generating constructors based on the goal type and example sets. When we are unable to generate a constructor---either at base type, or when examples include both left and right injections---we use an enumerative search to find a neutral term that solves the current subproblem or a match term that partitions the example set in such a way that synthesis can progress.

The algorithm halts when it finds a satisfying term. To prevent it from enumerating indefinitely, we also impose a limit on the sizes of neutral terms the algorithm can produce. An outer loop then iteratively raises that limit until the top-down search can find a satisfying term, or a time limit is reached.

Our algorithm, with \myth{}, is defined in terms of synthesis relations which govern how the algorithm's inputs relate to its required output. We present these relations in Subappendix \ref{app:synth-rules} and the algorithm that arises from them in Subappendix \ref{app:synth-alg}; we also present correctness theorems alongside each formulation.

\subsection{Synthesis Relations}\label{app:synth-rules}

The three synthesis relations, corresponding to neutral, pure normal, and normal terms respectively, are written
\begin{align*}
    [\UGamma ; \tau],c &\rightsquigarrow_\mathrm{M} M && \text{($M$-guess, Figure \ref{fig:m-guess-rules})} \\
    \exset{X}{\UGamma}{\tau} &\rightsquigarrow_\mathrm{P} P && \text{($P$-refine, Figure \ref{fig:p-refine-rules}), and} \\
    \exset{X}{\UGamma}{\tau},c &\rightsquigarrow_\mathrm{N} N && \text{($N$-refine, Figure \ref{fig:n-refine-rules}),}
\end{align*}
and defined by mutual induction. Each relates a specification to the terms of the appropriate sort that satisfy it, and therefore that we wish the synthesiser to produce.

The $M$-guess relation enumerates all neutral terms $\UGamma \vdash_\mathrm{M} M:\tau$ satisfying $c$; since neutrals are stuck at free variables, there is no way to use an example set to narrow down the search, so it is omitted from the relation. $P$-refinement deductively synthesises constructors based on the examples and the goal type, e.g. a goal of product type induces a pair constructor. $N$-refinement synthesises match terms, with the constraint applied to the scrutinee, and also allows producing a pure neutral if no more matches are needed.

The distinction between $P$- and $N$-refinement and the boxed sections of the rules do not appear in the corresponding relations of \myth{}.

\begin{figure}[t]
    \centering
    \begin{align*}
        \frac{
            \begin{array}{ccc}
                x:\tau \in \UGamma &\quad& \boxed{\UGamma \vdash c(x)}
            \end{array}
        }{\mguess{\UGamma}{\tau}{\boxed{c}}{x}}\:\textsc{MGuessVar}
    \end{align*}
    \begin{align*}
        \frac{
            \begin{array}{ccccc}
                \mguess{\UGamma}{\tau'\to\tau}{\boxed{\top}}{M} &\quad& \pref{\emptyset}{\UGamma}{\tau'}{P} &\quad& \boxed{\UGamma \vdash c(\app{M}{P})}
            \end{array}
        }{\mguess{\UGamma}{\tau}{\boxed{c}}{\app{M}{P}}}\:\textsc{MGuessApp}
    \end{align*}
    \begin{align*}
        \frac{
            \begin{array}{ccc}
                \mguess{\UGamma}{\tau\times \tau'}{\boxed{\top}}{M} &\quad& \boxed{\UGamma \vdash c(\fst M)}
            \end{array}
        }{\mguess{\UGamma}{\tau}{\boxed{c}}{\snd M}}\:\textsc{MGuessFst}
    \end{align*}
    \begin{align*}
        \frac{
            \begin{array}{ccc}
                \mguess{\UGamma}{\tau'\times \tau}{\boxed{\top}}{M} &\quad& \boxed{\UGamma \vdash c(\snd M)}
            \end{array}
        }{\mguess{\UGamma}{\tau}{\boxed{c}}{\snd M}}\:\textsc{MGuessSnd}
    \end{align*}
    \caption{Enumeration rules for $M$-guess.}
    \label{fig:m-guess-rules}
\end{figure}
\begin{figure}
    \centering
    \begin{align*}
        \frac{
            \begin{array}{ccc}
                \mguess{\UGamma}{\theta_i}{\boxed{\top}}{M} &\quad& \forall \sigma \mapsto v \in X.\: M[\sigma] \to^* v
            \end{array}
        }{\pref{X}{\UGamma}{\theta_i}{M}}\:\textsc{PRefineNeutral}
    \end{align*}
    \begin{align*}
        \frac{}{\pref{X}{\UGamma}{1}{\unit}}\:\textsc{PRefineUnit}
    \end{align*}
    \begin{align*}
        \frac{
            \begin{array}{c}
                X_1 \triangleq\{(\sigma \cdot [v/x]) \mapsto \tilde{f}(v) \| \sigma \mapsto \tilde{f} \in X,\,v \in \text{dom}(\tilde{f})\} \\
                \nref{X_1}{\UGamma,x:\tau_1}{\tau}{\boxed{(x \sqsubseteq)}}{N}
            \end{array}
        }{\pref{X}{\UGamma}{\tau_1\to\tau}{\lam{x:\tau_1}{N}}}\:\textsc{PRefineAbs}
    \end{align*}
    \begin{align*}
        \frac{
            \begin{array}{c}
                \pref{\{ \sigma \mapsto v_1 \| \sigma \mapsto\pair{v_1}{v_2} \in X \}}{\UGamma}{\tau_1}{P_1} \\
                \pref{\{ \sigma \mapsto v_2 \| \sigma \mapsto\pair{v_1}{v_2} \in X \}}{\UGamma}{\tau_2}{P_2}
            \end{array}
        }{\pref{X}{\UGamma}{\tau_1\times\tau_2}{\pair{P_1}{P_2}}}\:\textsc{PRefinePair}
    \end{align*}
    \begin{align*}
        \frac{
            \begin{array}{c}
                \nexists \sigma \mapsto \inr{\tau_1+\tau_2}v \in X \\
                \pref{\{ \sigma \mapsto v \| \sigma \mapsto \inl{\tau_1+\tau_2}v \in X \}}{\UGamma}{\tau_1}{P}
            \end{array}
        }{\pref{X}{\UGamma}{\tau_1+\tau_2}{\inl{\tau_1+\tau_2}P}}\:\textsc{PRefineInL}
    \end{align*}
    \begin{align*}
        \frac{
            \begin{array}{c}
                \nexists \sigma \mapsto \inl{\tau_1+\tau_2}v \in X \\
                \pref{\{ \sigma \mapsto v \| \sigma \mapsto \inr{\tau_1+\tau_2}v \in X \}}{\UGamma}{\tau_2}{P}
            \end{array}
        }{\pref{X}{\UGamma}{\tau_1+\tau_2}{\inr{\tau_1+\tau_2}P}}\:\textsc{PRefineInR}
    \end{align*}
    \caption{Refinement rules for $P$-refine.}
    \label{fig:p-refine-rules}
\end{figure}
\begin{figure}
    \centering
    \begin{align*}
        \frac{\pref{X}{\UGamma}{\tau}{P}}{\nref{X}{\UGamma}{\tau}{c}{P}}\:\textsc{NRefinePure}
    \end{align*}
    \begin{align*}
        \frac{
            \begin{array}{c}
                \mguess{\UGamma}{\tau_1+\tau_2}{\boxed{c}}{M} \\
                X_1 \triangleq \{ \sigma \cdot [v'/x_1] \mapsto v \| \sigma \mapsto v \in X, M[\sigma] \rightarrow^{*} \inl{\tau_1+\tau_2}(v') \} \\
                X_2 \triangleq \{ \sigma \cdot [v'/x_2] \mapsto v \| \sigma \mapsto v \in X, M[\sigma] \rightarrow^{*} \inr{\tau_1+\tau_2}(v') \} \\
                \forall i \in \{1,2\}.\; X_i \neq \emptyset \land \nref{X_i}{\UGamma,x_i:\tau_i}{\tau}{\boxed{(M \sqsubset)}}{N_i} \\
                \boxed{x_1 \notin FV(N_1) \land x_2 \notin FV(N_2) \implies N_1 \neq N_2}
            \end{array}
        }{
            \nref{X}{\UGamma}{\tau}{\boxed{c}}{\match{M}{x_1.N_1}{x_2.N_2}}
        }\:\textsc{NRefineMatch}
    \end{align*}
    \caption{Refinement rules for $N$-refine.}
    \label{fig:n-refine-rules}
\end{figure}

\begin{example}
To illustrate these rules, consider synthesising a term with the semantics $a \land b$, where $a$ and $b$ are Boolean values. We begin with examples
\begin{equation*}
    X = \left\{\begin{array}{cllccl}
        [ & \True/a, & \True/b &] & \mapsto & \True, \\{}
        [ & \True/a, & \False/b &] & \mapsto & \False, \\{}
        [ & \False/a, & \True/b &] & \mapsto & \False, \\{}
        [ & \False/a, & \False/b &] & \mapsto & \False
    \end{array}\right\}
\end{equation*}
with type $\exset{X}{\UGamma}{\tau}$, where $\UGamma = a:\Bool,b:\Bool$ and $\tau = \Bool$. To synthesise a solution, we look for some $N$ such that $\nref{X}{\UGamma}{\tau}{\top}{N}$.

$N$-refine has two rules: \textsc{NRefinePure} and \textsc{NRefineMatch}. We cannot apply \textsc{NRefinePure}, since there are no rules for $P$-refine that apply to our example set; the goal type only fits \textsc{PRefineInL} and \textsc{PRefineInR}, and our example outputs contain a mixture of left and right injections. To apply \textsc{NRefineMatch}, we search for neutral terms of sum type satisfying our constraint (which is currently $\top$); there are two such terms, $a$ and $b$.

Say we choose $a$. We then define the two new example sets, $X_1$ and $X_2$:
\begin{align*}
    X_1 &= \left\{\begin{array}{clllccl}
        [ & \True/a, & \True/b & \unit/x_1 &] & \mapsto & \True, \\{}
        [ & \True/a, & \False/b & \unit/x_1 &] & \mapsto & \False
    \end{array}\right\} \\
    X_2 &= \left\{\begin{array}{clllccl}
        [ & \False/a, & \True/b & \unit/x_2 &] & \mapsto & \False, \\{}
        [ & \False/a, & \False/b & \unit/x_2 &] & \mapsto & \False
    \end{array}\right\}.
\end{align*}
Neither is empty, so we try to synthesise branches. Starting with the left branch, we want $N_1$ s.t. $\nref{X_1}{\UGamma,x_1:1}{\Bool}{(a \sqsubset)}{N_1}$. Our example outputs contain both left and right injections, so all we have is \textsc{NRefineMatch}; the only scrutinee we can guess is $b$, since $\UGamma \not\vdash (a \sqsubset)(a)$. We split the examples:
\begin{align*}
    X_{11} &= \{ [\True/a, \True/b,\unit/x_1,\unit/x_{11}] \mapsto \True \} \\
    X_{12} &= \{ [\True/a, \False/b,\unit/x_1,\unit/x_{12}] \mapsto \False \}.
\end{align*}
Again, neither is empty: we recurse to the left branch. The only example has a left injection as output, so we apply \textsc{NRefinePure} using \textsc{PRefineInL}. That gives us a new example set $\{[\True/a, \True/b,\unit/x_1,\unit/x_{11}] \mapsto \unit\}$ with a goal type of $1$, so we use \textsc{PRefineUnit} to synthesise $\unit$. Backtracking and doing the same for the right branch with $X_{12}$, we get $N_1 = \match{b}{x_{11}.\inl{1+1}\unit}{x_{12}.\inr{1+1}\unit}$ or, sugared, $N_1 = \ifexpr{b}{\True}{\False}$.

On the other branch, we see that we can immediately apply \textsc{NRefinePure} through \textsc{PRefineInR}, and get $N_2 = \inr{1+1}\unit$. We could not have applied \textsc{NRefineMatch} here, since we can only enumerate $M=b$ and find $N_{21}=N_{22}=\inr{1+1}\unit$, breaking the last precondition of \textsc{NRefineMatch}.

Backtracking again gives us the full solution,
\begin{equation*}
    N = \ifexpr{a}{(\ifexpr{b}{\True}{\False})}{\False}.
\end{equation*}
\end{example}

\subsubsection{Correctness.} We show that the synthesis relations are sound (Theorem \ref{thm:relations-soundness}), in the sense that they only synthesise terms satisfying the example set, and semantically optimal (Theorem \ref{thm:relations-optimality}), meaning only semantically distinct terms can be synthesised.

Unfortunately, our algorithm is not complete. We inherit from \myth{} the requirement when synthesising match terms that the subset of examples going to each branch be non-empty. This requirement prevents synthesising arbitrary match terms that do not further the synthesis goal and ensures termination when the goal can be satisfied, but comes at the cost of blocking certain terms: since $M$-guessing doesn't propagate examples, \nsync{} can't enumerate neutrals that contain matches as subterms. Since by the grammar such a neutral must be or contain an application term, we call them terms with \textit{branching arguments}. Theorem \ref{thm:relations-completeness} shows that our synthesis relations are complete for terms without branching arguments.

\myth{} can also synthesise all of the normal forms that \nsync{} can, with the same restriction on branching arguments. Therefore, as long as normalisation does not introduce branching arguments---we believe that this holds, though the proof requires reasoning about the normalisation procedure itself---restricting our synthesis to normal terms does not reduce completeness.

\begin{theorem}[Soundness]\label{thm:relations-soundness}
    If $\nref{X}{\UGamma}{\tau}{\top}{N}$ then $N$ satisfies $X$, that is for all $\sigma \mapsto v \in X$, $N[\sigma] \to^* v$.
\end{theorem}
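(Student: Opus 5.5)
We will prove a stronger, mutually inductive statement covering both refinement relations, by induction on the derivation. Specifically: (i) if $\pref{X}{\UGamma}{\tau}{P}$ then for every $\sigma \mapsto v \in X$ we have $P[\sigma] \to^* w$ with $w = v$, and (ii) if $\nref{X}{\UGamma}{\tau}{c}{N}$ for \emph{any} constraint $c$ then for every $\sigma \mapsto v \in X$ we have $N[\sigma] \to^* w$ with $w = v$. Here $=$ is the extensional equality on values, which relates partial functions to concrete ones when they agree on the common domain. Generalising over $c$ is essential, because \textsc{NRefineMatch} and \textsc{PRefineAbs} recurse with the constraints $(M \sqsubset)$ and $(x \sqsubseteq)$. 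Constraints only restrict which terms are produced, so they play no role in the argument. The $M$-guess relation needs no semantic invariant here: it carries no examples, and \textsc{PRefineNeutral} checks $M[\sigma] \to^* v$ directly.

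The cases go as follows. \textsc{PRefineNeutral} holds by its own side condition. \textsc{PRefineUnit} holds since every value of type $1$ is $\unit$. For \textsc{PRefinePair}, \textsc{PRefineInL} and \textsc{PRefineInR}, we apply the induction hypothesis to the projected example sets and then use \textsc{EvalPair}, \textsc{EvalInL} and \textsc{EvalInR}. For \textsc{PRefineInL} we also need that every example output in $X$ is a left injection, which the negated premise together with typing of values guarantees. For \textsc{NRefinePure} the result is immediate from (i). For \textsc{NRefineMatch}, take $\sigma \mapsto v \in X$. Since $\UGamma \vdash M : \tau_1+\tau_2$ and $\sigma$ is closed and well-typed, $M[\sigma]$ evaluates to some $\inl{}v'$ or $\inr{}v'$. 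This relies on termination of STLC+ evaluation, which we assume from strong normalisation of the simply typed calculus. Say it is $\inl{}v'$. Then $\sigma\cdot[v'/x_1] \mapsto v \in X_1$, so the induction hypothesis gives $N_1[\sigma][v'/x_1] \to^* v$, and \textsc{EvalMatchL} concludes. This uses the substitution identity $N_1[\sigma\cdot[v'/x_1]] = N_1[\sigma][v'/x_1]$, which holds because $x_1$ is fresh for $\UGamma$.

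The main obstacle is \textsc{PRefineAbs}, where the expected output is a partial function $\tilde f$ and the term is $\lam{x:\tau_1}{N}$. The value $(\lam{x:\tau_1}{N})[\sigma] = \lam{x:\tau_1}{N[\sigma]}$ evaluates to itself by \textsc{EvalValue}. We then show it is extensionally equal to $\tilde f$: for each $v \in \text{dom}(\tilde f)$, the example $\sigma\cdot[v/x] \mapsto \tilde f(v)$ lies in $X_1$. The induction hypothesis then gives $N[\sigma][v/x] \to^* w$ with $w = \tilde f(v)$, so by \textsc{EvalApp} the application $\app{(\lam{x:\tau_1}{N[\sigma]})}{v}$ evaluates to the same $w$. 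Outside $\text{dom}(\tilde f)$ nothing is required. Because outputs may nest partial functions inside pairs and injections, the whole induction must be stated up to extensional equality rather than syntactic identity. The \textsc{PRefineNeutral} side condition must be read the same way, and this is the point we will state carefully. Finally, the theorem is instance (ii) with $c = \top$.
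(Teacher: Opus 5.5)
Your proof is correct and follows the same route as the paper's: induction on the refinement derivation, with \textsc{PRefineNeutral} discharged by its explicit check, \textsc{PRefineUnit} trivially, and the remaining rules by the induction hypothesis. The points you make explicit are details the paper's one-line proof leaves implicit: generalising over the constraint $c$, working up to extensional equality for partial-function outputs, and needing $M[\sigma]$ to actually evaluate to an injection in \textsc{NRefineMatch}, which relies on termination and on the inputs $\sigma$ being concrete values rather than partial functions.
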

\begin{proof}
    In \textsc{PRefineNeutral}, satisfaction of the example set is checked explicitly, and for \textsc{PRefineUnit} it is trivial; that the other refinement rules are sound follows by induction. \qed
\end{proof}

\begin{theorem}[Semantic Optimality]\label{thm:relations-optimality}
    If $\nref{X}{\UGamma}{\tau}{c}{N_1}$ and $\nref{X}{\UGamma}{\tau}{c}{N_2}$, then $\UGamma \vdash N_1 \simeq_\tau N_2$ if and only if $N_1 = N_2$, and similarly for $\pref{X}{\UGamma}{\tau}{P_1,P_2}$ and $\mguess{\UGamma}{\tau}{c}{M_1,M_2}$.
\end{theorem}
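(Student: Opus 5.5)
The plan is to reduce the statement to Theorem~\ref{thm:normalisation} and its corollary, which say that syntactic and semantic equality coincide on normal terms, pure normal terms and neutral terms of a common type in a common context. What has to be shown is that the synthesis relations only ever produce such terms. So the core step is a \emph{typing-preservation} lemma, proved by simultaneous induction on the derivations of the three relations:
\begin{itemize}
\item if $\mguess{\UGamma}{\tau}{c}{M}$ then $\UGamma \vdash_\mathrm{M} M:\tau$ and $\UGamma \vdash c(M)$;
\item if $\pref{X}{\UGamma}{\tau}{P}$ then $\UGamma \vdash_\mathrm{P} P:\tau$;
\item if $\nref{X}{\UGamma}{\tau}{c}{N}$ then $\UGamma \vdash_\mathrm{N} N:\tau$, and, whenever $N = \match{M}{\dots}{\dots}$, also $\UGamma \vdash c(M)$.
\end{itemize}
Once this lemma holds, the theorem follows directly: both $N_1$ and $N_2$ are normal at $\tau$ in $\UGamma$, so Theorem~\ref{thm:normalisation} gives $N_1 = N_2 \iff \UGamma \vdash N_1 \simeq_\tau N_2$. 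The $P$ and $M$ cases follow in the same way from the corollary.

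For the induction, most rules map directly onto a normal-form rule with the same shape.
\begin{itemize}
\item \textsc{MGuessVar}, \textsc{MGuessApp}, \textsc{MGuessFst} and \textsc{MGuessSnd} correspond to \textsc{NFVar}, \textsc{NFApp}, \textsc{NFFst} and \textsc{NFSnd}. The constraint $c$ is checked explicitly in each premise, so the constraint part of the claim is immediate. The \textsc{MGuessFst} rule as printed concludes $\snd M$; I read it as the evident typo for $\fst M$.
\item \textsc{PRefineNeutral} is at base type $\theta_i$, so it yields \textsc{NFNeutral}. \textsc{PRefineUnit} yields \textsc{NFUnit}, \textsc{PRefinePair} yields \textsc{NFPair}, and the two injection rules yield \textsc{NFInL} and \textsc{NFInR}. \textsc{NRefinePure} yields \textsc{NFPure}.
\end{itemize}
The substance lies in the two binder rules, which must be matched against the ordered rules \textsc{NFAbs} and \textsc{NFMatch} of Figure~\ref{fig:norm-rules-ordered}.
\begin{itemize}
\item For \textsc{PRefineAbs}, the body is synthesised under the constraint $(x \sqsubseteq)$. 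If the body is a match, the strengthened induction hypothesis gives $\UGamma,x:\tau_1 \vdash x \sqsubseteq M$, which is exactly the side condition of \textsc{NFAbs}.
\item For \textsc{NRefineMatch}, the scrutinee comes from $M$-guess, so $\UGamma \vdash_\mathrm{M} M : \tau_1+\tau_2$ and $\UGamma \vdash c(M)$.
\item Each branch is synthesised under $(M \sqsubset)$, so any top-level scrutinee $M_i$ of $N_i$ satisfies $\UGamma,x_i:\tau_i \vdash M \sqsubset M_i$.
\item The distinct-branches side condition is copied verbatim from the premises.
\end{itemize}
Hence \textsc{NFMatch} applies.

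The step I expect to be the main obstacle is the interaction between the ordering constraints and context extension. The constraint for a branch is checked in the extended context $\UGamma,x_i:\tau_i$, whereas $M$ itself was typed in $\UGamma$. I will rely on the property stated for Figure~\ref{fig:ord-rules}: extending a context preserves the order on already-definable terms and places new terms above the old ones. This guarantees that the judgement $\UGamma,x_i:\tau_i \vdash M \sqsubset M_i$ produced by the recursive call is the one \textsc{NFMatch} requires, and that $c(M)$ established in $\UGamma$ remains meaningful where it is used. I will also rely on the paper's remark that the ordered rules \textsc{NFAbs} and \textsc{NFMatch} do not weaken the original guard conditions of Balat et al., since that is what lets Theorem~\ref{thm:normalisation} apply to terms built with the ordered rules. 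A remaining detail is that the constraints are only imposed on top-level scrutinees, while \textsc{NFMatch} is likewise only stated about the immediate sub-scrutinees $M_i$; the two line up, so a local check at each node suffices and no global invariant on all guards is needed.
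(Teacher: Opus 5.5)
Your proposal is correct and is essentially the paper's proof: the paper likewise argues that the synthesis relations erase to the ordered normal-form rules, so $M$-guess, $P$-refine and $N$-refine produce neutral, pure normal and normal terms respectively, and then it invokes Theorem~\ref{thm:normalisation} and its corollary. Your strengthened invariant, which records $c(M)$ for the top-level scrutinee, is exactly the detail needed to discharge the boxed side conditions of \textsc{NFAbs} and \textsc{NFMatch}; the context-extension issue you flag does not arise, because each constraint is checked in the same extended context ($\UGamma,x:\tau_1$ or $\UGamma,x_i:\tau_i$) in which those rules state their side conditions.
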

\begin{proof}
    The synthesis relations erase to the normal form rules (Subappendix \ref{app:stlc_norm}), so the terms produced by $N$-refine are normal (by $P$-refine are pure normal, and by $M$-guess are neutral), then by Theorem \ref{thm:normalisation} (and its corollary). \qed
\end{proof}

To prove bounded completeness for $N$-refine, we first need to show bounded completeness for $M$-guessing via the following lemma.

\begin{lemma}
    For any neutral term $\UGamma \vdash_\mathrm{M} M:\tau$ with no branching arguments, and any constraint $c$ s.t. $\UGamma \vdash c(M)$, $\mguess{\UGamma}{\tau}{c}{M}$.
\end{lemma}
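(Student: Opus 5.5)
We argue by structural induction on the derivation of $\UGamma \vdash_\mathrm{M} M:\tau$, generalising over the constraint $c$. The induction must be mutual with a corresponding claim for pure normal arguments: every pure normal term $\UGamma \vdash_\mathrm{P} P:\tau'$ that contains no match subterm satisfies $\pref{\emptyset}{\UGamma}{\tau'}{P}$. This is exactly the form in which pure terms occur inside a neutral, namely as arguments to \textsc{MGuessApp}. Having no branching arguments means precisely that every such argument $P$, and recursively every neutral inside it, is match-free. So the induction hypothesis is available for every argument we meet.

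The neutral cases follow the last rule used.
\begin{itemize}
\item \textsc{NFVar}: apply \textsc{MGuessVar}. Its side condition $\UGamma \vdash c(x)$ is exactly the hypothesis.
\item \textsc{NFFst} or \textsc{NFSnd}: write $M = \fst M'$ or $M = \snd M'$. The inner neutral $M'$ has no branching arguments. The trivial constraint holds, $\UGamma \vdash \top(M')$, so the induction hypothesis with $c=\top$ gives $\mguess{\UGamma}{\tau\times\tau'}{\top}{M'}$. Then \textsc{MGuessFst} or \textsc{MGuessSnd} applies, with the outer constraint discharged by $\UGamma \vdash c(M)$. Figure \ref{fig:m-guess-rules} has a typo: the conclusion of \textsc{MGuessFst} should read $\fst M$, and we read it that way.
\item \textsc{NFApp}: write $M = \app{M'}{P}$. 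The induction hypothesis at $\top$ gives $\mguess{\UGamma}{\tau'\to\tau}{\top}{M'}$, the auxiliary claim gives $\pref{\emptyset}{\UGamma}{\tau'}{P}$, and \textsc{MGuessApp} then yields the result.
\end{itemize}

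The auxiliary claim for a match-free $P$ is proved by cases on $P$ with the empty example set.
\begin{itemize}
\item $\unit$: immediate by \textsc{PRefineUnit}.
\item Pairs: \textsc{PRefinePair} applies, since both projected example sets are empty.
\item Injections: \textsc{PRefineInL} or \textsc{PRefineInR} applies, since the nonexistence premise holds vacuously.
\item A neutral at base type: use \textsc{PRefineNeutral}. Its example check is vacuous, and the neutral claim at $c=\top$ supplies the $M$-guess.
\item $\lam{x:\tau_1}{N}$: match-freeness forces $N$ to be a pure $P'$. With $X=\emptyset$ we get $X_1=\emptyset$, and we need $\nref{\emptyset}{\UGamma,x:\tau_1}{\tau}{(x\sqsubseteq)}{P'}$. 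This follows from \textsc{NRefinePure}, because \textsc{NRefinePure} ignores the constraint, and the induction hypothesis applies to $P'$ in the extended context.
\end{itemize}

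The main obstacle is getting the notion of ``no branching arguments'' right, so that the inner claim holds exactly on the argument positions the induction reaches. In particular, a $\lambda$-argument whose body is a match must be excluded. With empty examples such a term can never be produced, since \textsc{NRefineMatch} requires $X_i\neq\emptyset$. This is exactly the incompleteness the paper describes, so we fix the definition of no branching arguments to be ``hereditarily match-free in every argument position''. A secondary check is that the induction measure is well-founded across the two sorts, neutral and pure. This holds because every recursive call is on a strict syntactic subterm.
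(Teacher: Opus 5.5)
Your proof is correct and follows the paper's argument. The paper likewise notes that every $M$-guess rule checks its constraint locally, so subterms can be handled at $c=\top$, and then does structural induction on $M$, using that with an empty example set and no matches the synthesis rules coincide with the normal-form rules; your mutual claim $\pref{\emptyset}{\UGamma}{\tau'}{P}$ for match-free pure $P$ is exactly that observation made explicit, and your reading of the \textsc{MGuessFst} typo is the intended one.
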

\begin{proof}
    All four $M$-guess rules check the constraint explicitly rather than propagating it, so we can ignore the constraint in the rest of the proof and check that $\UGamma \vdash_\mathrm{M} M:\tau$ implies $\mguess{\UGamma}{\tau}{\top}{M}$ where $M$ has no branching arguments. This is just induction on the structure of $M$, with the observation that having an empty example set and no match terms renders the synthesis and normalisation rules almost identical. \qed
\end{proof}

\begin{theorem}[Bounded Completeness]\label{thm:relations-completeness}
    For any term $\UGamma \vdash t:\tau$ whose normal form $N$ has no branching arguments, there is at least one example set $X$ s.t. $\nref{X}{\UGamma}{\tau}{\top}{N}$.
\end{theorem}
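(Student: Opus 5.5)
We prove a stronger statement by mutual structural induction on normal forms without branching arguments. For every $\UGamma \vdash_\mathrm{N} N:\tau$ and every constraint $c$ satisfied by the top-level scrutinee of $N$ (if $N$ is a match), there is an example set $X$ with $\nref{X}{\UGamma}{\tau}{c}{N}$. Similarly, for every $\UGamma \vdash_\mathrm{P} P:\tau$ there is an $X$ with $\pref{X}{\UGamma}{\tau}{P}$. The theorem then follows. Theorem~\ref{thm:normalisation} gives the unique normal form $N$ of $t$, and we instantiate $c=\top$, which every scrutinee satisfies.

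\textbf{Constructing $X$.} We build $X$ as the set of examples $\sigma \mapsto v$ with $N[\sigma] \to^* v$, as $\sigma$ ranges over a suitable finite family of environments. Soundness then holds by construction. Since outputs may contain only partial functions, each function-typed output is the graph of $N[\sigma]$ restricted to finitely many inputs. The family of environments, and the domains of these partial functions, are chosen by the induction so that every subterm receives enough examples.

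\textbf{The $P$ cases.} These follow the rules directly.
\begin{itemize}
\item At base type we apply \textsc{PRefineNeutral}, using the preceding lemma for the neutral $M$. Satisfaction of $X$ is automatic because $X$ was generated from $M$ itself. Any $X$ works, including $\emptyset$.
\item \textsc{PRefineUnit} applies with no conditions.
\item For pairs we apply \textsc{PRefinePair}. The projections of $X$ are exactly the example sets the induction hypotheses need, so we take the union of the example sets obtained for $P_1$ and $P_2$, paired up over a common set of environments.
\item For injections, every example of a given injected term has the same tag, so \textsc{PRefineInL} or \textsc{PRefineInR} applies.
\item For $\lam{x:\tau_1}{N}$, the induction hypothesis gives an example set $X_1$ over $\UGamma,x:\tau_1$ for $N$. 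We build $X$ by grouping $X_1$ by its $\UGamma$-part and turning each group into a partial function on the $x$-values. The constraint $(x \sqsubseteq)$ required by \textsc{PRefineAbs} is exactly the boxed condition of \textsc{NFAbs}.
\end{itemize}

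\textbf{The $N$ cases.} A pure $N$ is handled by \textsc{NRefinePure}. For $N=\match{M}{x_1.N_1}{x_2.N_2}$ we use \textsc{NRefineMatch}, discharging its premises as follows.
\begin{itemize}
\item The scrutinee is obtained from the lemma, using the assumption $\UGamma \vdash c(M)$.
\item The branch constraints $(M\sqsubset)$ are satisfied by the boxed condition of \textsc{NFMatch}.
\item The distinct-branches condition is inherited verbatim from \textsc{NFMatch}.
\end{itemize}
The induction hypotheses give example sets $Y_1,Y_2$ for the branches over $\UGamma,x_i:\tau_i$. We need an $X$ whose splitting along $M$ reproduces sets from which $N_1$ and $N_2$ are still derivable, with both nonempty.

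\textbf{Main obstacle: realisability.} The examples in $Y_i$ must arise from environments $\sigma$ with $M[\sigma]\to^*\inl{}v'$ (respectively $\inr{}v'$). The value of $x_i$ is not free; it is determined by $\sigma$ through $M$. The induction hypothesis must therefore be strengthened: derivability should hold for every sufficiently rich $X$, not just for some $X$.

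Two facts support this strengthening.
\begin{itemize}
\item \textbf{Monotonicity.} Adding examples that $N$ itself satisfies never blocks a derivation. The only example-dependent side conditions are the satisfaction check in \textsc{PRefineNeutral}, which is sound for examples generated from $N$, the absence of opposite injections, which likewise holds for such examples, and the nonemptiness of the $X_i$.
\item \textbf{Nonemptiness.} Each $X_i$ is nonempty provided both branches of every match are reachable by some environment. We would argue this by taking environments that interpret base types by free constants. Neutrals in a normal form are stuck on distinct variables, so by an argument in the spirit of Theorem~\ref{thm:normalisation}, independent choices for the head variables can steer each scrutinee to either injection.
\end{itemize}
Making this independence argument precise, especially for nested scrutinees that depend on earlier bound $x_i$ and for function-typed variables that must be instantiated with concrete functions, is the step I expect to be hardest. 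I would try first to pick, for each function variable, a function that tabulates freely on a finite set of arguments.
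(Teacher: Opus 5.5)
Your architecture is the paper's. You generate $X$ by evaluating $N$ on a family of environments chosen so that every branch of every match receives an example. You then read off the derivation by reversing the refinement rules, with the $M$-guess lemma supplying scrutinees and base-type neutrals. Your strengthening over $c$, the ``every branch is reached'' invariant, and the monotonicity argument for taking unions of per-branch environments are sound, and more explicit than the paper's sketch.

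\textbf{The gap.} The gap is the step you flag yourself: you never show that every branch is reachable, and that is the only non-routine content of the theorem. Your heuristic would also not work as stated. Distinct scrutinees need not have distinct heads ($f\,a$ versus $f\,b$, or $g\,P_1$ versus $g\,P_2$), so they cannot be steered by independent choices of head variables. Sending $f\,a$ left and $f\,b$ right requires an environment in which $a$ and $b$ evaluate differently. That is semantic distinctness, not syntactic independence. You also use the ordering conditions only to discharge $c$, but they are what makes reachability possible at all. If a scrutinee could recur along a path, the constraints $M=\inl{}x_1$ and $M=\inr{}x_2$ would clash and some branch would be dead.

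\textbf{The missing idea.} The paper attaches to each branch a neutral constrained environment $\UGamma|\UXi$ in the sense of Altenkirch et al. Entering branch $i$ of $\match{M}{x_1.N_1}{x_2.N_2}$ adds $x_i:\tau_i$ to $\UGamma$ and $M=\inl{}x_1$ (respectively $M=\inr{}x_2$) to $\UXi$. Abstractions only extend $\UGamma$, and pairs copy the environment into both components. Each $\UXi$ is then claimed satisfiable for two reasons:
\begin{itemize}
\item the ordering conditions of \textsc{NFMatch} and \textsc{NFAbs} prevent the same neutral from occurring twice in one $\UXi$;
\item by Theorem~\ref{thm:normalisation} and its corollary, syntactically distinct neutrals are semantically distinct.
\end{itemize}
Take one satisfying assignment per branch and restrict it to the starting context. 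Values of $\lambda$-bound variables become domain points of partial-function outputs. This yields $X$, and the induction then goes through as you describe.

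The paper's step is itself only sketched. Passing from pairwise semantic distinctness to joint satisfiability of a whole $\UXi$ still needs an argument. Your idea of fresh constants for base types and finite tables for function variables is the natural tool to try there, but only once the no-repetition and distinctness facts are in hand.
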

\begin{sketch}
    We borrow from Altenkirch et al. \cite{altenkirch2001normalization} the notion of a neutral constrained environment $\UGamma|\UXi$, which gives a context $\UGamma$ a set $\Xi$ of constraints of the form $M = \inl{\tau_1+\tau_2}x_1$ or $M = \inr{\tau_1+\tau_2}x_2$, where $\UGamma \vdash_\mathrm{M} M:\tau_1+\tau_2$ and $x_i:\tau_i \in \UGamma$.

    We can recurse on the structure of $N$ to find a neutral constrained environment for each branch. If $\UGamma|\UXi \vdash \match{M}{x_1.N_1}{x_2.N_2}$, then we get $\UGamma,x_1:\tau_1|\UXi,M=\inl{\tau_1+\tau_2}x_1 \vdash N_1$ and $\UGamma,x_2:\tau_2|\UXi,M=\inr{\tau_1+\tau_2}x_2 \vdash N_2$. Other term-level symbols may duplicate the environment into branches (pairs) or add to the typing context (abstractions) as usual, but do not modify $\UXi$.

    Once each branch's neutral constrained environment has been found, we construct assignments $\sigma$ to variables in each $\UGamma$ that satisfy $\UXi$, that is where for each $M = \inl{\tau_1+\tau_2}x_i \in \UXi$ we have $M[\sigma] \to^* \inl{\tau_1+\tau_2}(x_i[\sigma])$. By Theorem \ref{thm:normalisation}, all non-identical $M$ terms are semantically distinct, and as a consequence of \textsc{NFMatch} we will not see identical $M$ terms in the same $\UXi$, so all $\UXi$ are satisfiable.

    We now have a set of assignments $\sigma$ for each branch; the rest of the construction of $X$, where each input is a restriction of some $\sigma$ to just the variables in the starting context, follows by reversing the refinement rules. That $\nref{X}{\UGamma}{\tau}{\top}{N}$ is by induction on the structure of $N$, using the fact that every branch has at least one example in $X$ (by our construction) and that every neutral term can be enumerated (by the lemma). \qed
\end{sketch}

\subsection{Synthesis Algorithm}\label{app:synth-alg}

The algorithm is simply to exhaustively apply the synthesis relations until a solution emerges, or there are no more rules to apply.

Since there are contexts in which $M$-guessing could relate one input to infinitely many terms, meaning our algorithm would get stuck enumerating them, we bound the size of neutral terms that our synthesiser can produce: $n_M$ bounds the neutrals of base type, and $n_\delta$ the neutrals of sum type. With such a limit, synthesis terminates; we then iteratively increase that size limit until a solution is found or a time limit is reached. We also provide a match depth parameter $d$, with the idea of preventing \nsync{} from over-specialising its output to the example set by producing too many branches. Since these parameters bound the search space, we also allow the user to specify a procedure for increasing them, so that synthesis attempts can continue until the parameters are large enough for a solution to be found.

The main function of \nsync{}, shown in Algorithm \ref{alg:nsync}, is just a wrapper around \textbf{NRefine} to handle the parameter search. The mutually recursive functions \textbf{NRefine} and \textbf{PRefine} (Algorithms \ref{alg:nrefine} and \ref{alg:prefine}, respectively) are the key components of \nsync{}, and implement the $N$-refine and $P$-refine relations. \textbf{NRefine} tries to apply \textsc{NRefinePure} first, and falls back on \textbf{NRefineMatch} if that fails; \textbf{PRefine} attempts to apply whichever of the $P$-refinement rules applies to the input based on its goal type. Both functions return \fail{} if no term could be synthesised within the size bounds.

\begin{algorithm}[p]
    \caption{$\textbf{NSynC}(\exset{X}{\UGamma}{\tau}, (n_M,n_\delta,d), \textbf{step})$}\label{alg:nsync}
    \KwData{Example set $\exset{X}{\UGamma}{\tau}$, search parameters $n_M,n_\delta,d \in \mathbb{N}$, search parameter increment procedure $\textbf{step} : \mathbb{N}^3 \to \mathbb{N}^3$}
    \KwResult{A normal term $N$ with no branching arguments which satisfies $X$.}
    $N \gets \textbf{NRefine}(\exset{X}{\UGamma}{\tau};\top;(n_M,n_\delta,d))$\;
    \While{$N = \fail$}{
        $n_M,n_\delta,d \gets \textbf{step}(n_M,n_\delta,d)$\;
        $N \gets \textbf{NRefine}(\exset{X}{\UGamma}{\tau};\top;(n_M,n_\delta,d))$\;
    }
    \Return{$N$}\;
\end{algorithm}

\begin{algorithm}[p]
    \caption{$\textbf{NRefine}(\exset{X}{\UGamma}{\tau};c;(n_M,n_\delta,d))$}\label{alg:nrefine}
    \KwData{Example set $\exset{X}{\UGamma}{\tau}$, constraint $c$, search parameters $n_M,n_\delta,d \in \mathbb{N}$}
    \KwResult{A term $N$ with no branching arguments s.t. $\nref{X}{\UGamma}{\tau}{c}{N}$, or \fail{} if no $N$ exists within the search parameters.}
    $P \gets \textbf{PRefine}(\exset{X}{\UGamma}{\tau}; (n_M,n_\delta,d))$\;
    \lIf{$P \neq \fail$}{\Return{$P$}}
    \ElseIf{$d \geq 1$}{
        \While{$M \gets \textbf{MGuessSum}(\UGamma;c;n_\delta)$}{
            $c \gets (M\sqsubset)$\;
            $\tau_1 + \tau_2 \gets \textbf{type}(M)$\;
            $X_1 \gets \{ \sigma \cdot [v'/x_1] \mapsto v \| \sigma \mapsto v \in X, M[\sigma] \rightarrow^{*} \inl{\tau_1+\tau_2}(v') \}$\;
            $X_2 \gets \{ \sigma \cdot [v'/x_2] \mapsto v \| \sigma \mapsto v \in X, M[\sigma] \rightarrow^{*} \inr{\tau_1+\tau_2}(v') \}$\;
            \lIf{$X_1 = \emptyset \lor X_2 = \emptyset$}{\Skip}
            $N_1 \gets \textbf{NRefine}(\exset{X_1}{\UGamma,x_1:\tau_1}{\tau};(M \sqsubset);(n_M,n_\delta;d-1))$\;
            \lIf{$N_1 = \fail$}{\Skip}
            \lIf{$x_1 \notin FV(N_1) \land \forall \sigma \mapsto v \in X.\: N_1[\sigma] \to^* v$}{\Return{$N_1$}}
            $N_2 \gets \textbf{NRefine}(\exset{X_2}{\UGamma,x_2:\tau_2}{\tau};(M \sqsubset);(n_M,n_\delta;d-1))$\;
            \lIf{$N_2 = \fail$}{\Skip}
            \Return{$\match{M}{x_1.N_1}{x_2.N_2}$}\;
        }
    }
    \Return{\fail}\;
\end{algorithm}

\begin{algorithm}[p]
    \caption{$\textbf{PRrefine}(\exset{X}{\UGamma}{\tau}; (n_M,n_\delta,d))$}\label{alg:prefine}
    \KwData{Example set $\exset{X}{\UGamma}{\tau}$, search parameters $n_M,n_\delta,d \in \mathbb{N}$}
    \KwResult{A term $P$ with no branching arguments s.t. $\pref{X}{\UGamma}{\tau}{P}$, or \fail{} if no $P$ exists within the search parameters.}
    \Match{$\tau$}{
        \uCase{$\theta_i$}{
            $c \gets \top$\;
            \While{$M \gets \textbf{MGuessBase}(\Gamma;\theta_i;c;n_M)$}{
                \lIf{$\forall \sigma \mapsto v \in X.\: M[\sigma] \to^* v$}{\Return{$M$}}
                $c \gets (M \sqsubset)$\;
            }
            \Return{\fail}\;
        }
        \lCase{$1$}{\Return{$\langle\rangle$}\label{lin:ret-unit}}
        \uCase{$\tau_1 \to \tau_2$}{
            $X' \gets \{(\sigma \cdot [v/x]) \mapsto \tilde{f}(v) \| \sigma \mapsto \tilde{f} \in X,\,v \in \text{dom}(\tilde{f})\}$\;
            $N \gets \textbf{NRefine}(\exset{X'}{\UGamma,x:\tau_1}{\tau_2};(x \sqsubseteq);(n_M,n_\delta,d))$\;
            \lIf{$N = \fail$}{\Return{\fail}}
            \Return{$\lam{x:\tau_1}{N}$}\;
        }
        \uCase{$\tau_1 \times \tau_2$}{
            $P_1 \gets \textbf{PRefine}(\exset{\{\sigma \mapsto v_1 \| \sigma\mapsto\pair{v_1}{v_2} \in X\}}{\UGamma}{\tau_1};(n_M,n_\delta,d))$\;
            \lIf{$P_1 = \fail$}{\Return{\fail}}
            $P_2 \gets \textbf{PRefine}(\exset{\{\sigma \mapsto v_2 \| \sigma\mapsto\pair{v_1}{v_2} \in X\}}{\UGamma}{\tau_2};(n_M,n_\delta,d))$\;
            \lIf{$P_2 = \fail$}{\Return{\fail}}
            \Return{$\pair{P_1}{P_2}$}
        }
        \Case{$\tau_1 + \tau_2$}{
            \uIf{$\nexists \sigma \mapsto \inr{\tau}v \in X$}{
                $P \gets \textbf{PRefine}(\exset{\{\sigma \mapsto v \| \sigma \mapsto \inl{\tau}(v) \in X\}}{\UGamma}{\tau_1};(n_M,n_\delta,d)$\;
                \lIf{$P = \fail$}{\Return{\fail}}
                \Return{$\inl{\tau}(P)$}\;
            }
            \uElseIf{$\nexists \sigma \mapsto \inl{\tau}v \in X$}{
                $P \gets \textbf{PRefine}(\exset{\{\sigma \mapsto v \| \sigma \mapsto \inr{\tau}(v) \in X\}}{\UGamma}{\tau_2};(n_M,n_\delta,d)$\;
                \lIf{$P = \fail$}{\Return{\fail}}
                \Return{$\inr{\tau}(P)$}\;
            }
            \Else{
                \Return{\fail}\;
            }
        }
    }
\end{algorithm}

Both of these algorithms call enumeration functions which implement $M$-guessing; \textbf{MGuessSum} and \textbf{MGuessBase}, respectively. Both are wrappers around a successor function $S(\UGamma,M,\tau^\lozenge,n_M)$, which gives the least neutral term $M'$ with no branching arguments s.t. $\UGamma \vdash_\mathrm{M} M':\tau^\lozenge$, $\UGamma \vdash M \sqsubset M'$, and the size of $M'$ is at most $n_M$ (or \fail{} if there is no such term). $\tau^\lozenge$ here is a type template; for example, $\theta_i \times \lozenge$ represents any product type with $\theta_i$ as the first element, and we enumerate neutrals of this type template to get $\UGamma \vdash \fst M:\theta_i$. The full definition of the successor function is omitted for space; it follows from the definition of the ordering relation and the $M$-guessing rules. The definition of \textbf{MGuessSum} is
\begin{align*}
    \textbf{MGuessSum}(\UGamma;(M \sqsubset);n_\delta) &\triangleq S(\UGamma,M,\lozenge+\lozenge,n_\delta) \\
    \textbf{MGuessSum}(\UGamma;(M \sqsubseteq);n_\delta) &\triangleq \begin{cases}
        M & \text{if}\;\UGamma\vdash M:\lozenge+\lozenge \\ & \quad{}\land \text{size}(M) \leq n_\delta \\
        S(\UGamma,M,\lozenge+\lozenge,n_\delta) & \text{otherwise}
    \end{cases} \\
    \textbf{MGuessSum}(\UGamma;\top;n_\delta) &\triangleq \textbf{MGuessSum}(\UGamma;(x_0 \sqsubseteq);n_\delta),
\end{align*}
where in the last case $x_0$ refers to the first variable in $\UGamma$, and \textbf{MGuessSum} returns \fail{} if $\UGamma$ is empty. \textbf{MGuessBase} is definied similarly, but searches for a concrete type $\theta_i$ instead of $\lozenge+\lozenge$.

\subsubsection{Correctness.} As \nsync{} is a faithful implementation of the synthesis relations, we get identical correctness guarantees as long as the parameter search is well-behaved:

\begin{theorem}[Correctness of \nsync{}]\label{thm:nsync-correctness}
    Assume \textbf{step} and $n_M,n_\delta,d$ are such that iterating \textbf{step} on $n_M,n_\delta,d$ tends all three to infinity. Then:
    \begin{enumerate}
        \item \textbf{Soundness.} If $\textbf{NSynC}(\exset{X}{\UGamma}{\tau},(n_M,n_\delta,d),\textbf{step}) = N$, then $N$ satisfies $X$.
        \item \textbf{Semantic Optimality.} If any function of $\textbf{NSynC}$ enumerates a term, it does not enumerate the same term except as part of a different function call.
        \item \textbf{Bounded Completeness.} If there are any $\UGamma \vdash_\mathrm{N} N:\tau$ with no branching arguments that satisfy $X$, then there is one s.t.
        \begin{equation*}
            \textbf{NSynC}(\exset{X}{\UGamma}{\tau}, (n_M, n_\delta, d), \textbf{step}) = N
        \end{equation*}
    \end{enumerate}
\end{theorem}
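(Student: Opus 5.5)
The plan is to prove the three parts separately. Each part reduces to the corresponding property of the synthesis relations (Theorems \ref{thm:relations-soundness}, \ref{thm:relations-optimality} and \ref{thm:relations-completeness}), via a \emph{faithfulness lemma}. The lemma says that for all parameters $(n_M,n_\delta,d)$ and every constraint $c$:
\begin{itemize}
\item if $\textbf{NRefine}(\exset{X}{\UGamma}{\tau};c;(n_M,n_\delta,d)) = N \neq \fail$, then $\nref{X}{\UGamma}{\tau}{c}{N}$;
\item if $\textbf{PRefine}(\exset{X}{\UGamma}{\tau};(n_M,n_\delta,d)) = P \neq \fail$, then $\pref{X}{\UGamma}{\tau}{P}$;
\item every term returned by \textbf{MGuessSum} or \textbf{MGuessBase} under constraint $c$ is related by $M$-guess under $c$.
\end{itemize}
I would prove this by simultaneous induction on $d$ and on the size of $\tau$, reading off one rule per branch of Algorithms \ref{alg:nrefine} and \ref{alg:prefine}. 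The $M$-guess clause follows from the defining property of the successor function $S$: it returns a neutral of the right type template, with no branching arguments, strictly above the given bound in $\sqsubset$.

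For the match branch, the recursive calls, the non-emptiness checks on $X_1$ and $X_2$, and the constraint $(M\sqsubset)$ match \textsc{NRefineMatch} directly. The early return of $N_1$ needs separate treatment. There, $N_1$ satisfies all of $X$ and does not mention $x_1$, so I would argue it is derivable for $X$ by strengthening the context, which needs a small weakening or strengthening lemma for the relations. The same observation shows the algorithm never builds a match whose branches are equal and both ignore their bound variable, so the final boxed premise holds.

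\textbf{Soundness} is then immediate. The outer loop returns only a non-\fail{} result of \textbf{NRefine} with constraint $\top$, so Theorem \ref{thm:relations-soundness} applies.

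For \textbf{semantic optimality}, each \textbf{MGuess} loop strictly increases its constraint $c \gets (M\sqsubset)$ after every candidate. Since $\sqsubset$ is a strict total order, a single call never yields the same neutral twice. Different scrutinees give syntactically different match terms, which are semantically distinct by Theorem \ref{thm:normalisation} because all outputs are normal by the faithfulness lemma. \textbf{PRefine} is deterministic given its inputs and yields at most one candidate. So within a single call no term is enumerated twice, and repeats arise only across calls, for instance across iterations of \textbf{step}.

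For \textbf{bounded completeness}, fix a normal $N$ with no branching arguments satisfying $X$. Choose $N^\ast$ to be the least such term in a fixed well-order that matches the search order: compare by match depth and neutral sizes, then by $\sqsubset$ on scrutinees. Because \textbf{step} drives $n_M$, $n_\delta$ and $d$ to infinity, some iterate exceeds the depth and neutral sizes of $N^\ast$. I would then show by induction on $N^\ast$ that from that point \textbf{NRefine} returns some normal term satisfying $X$. The Lemma on $M$-guessing guarantees that every scrutinee and base-type neutral of $N^\ast$ is reachable in the enumeration. The theorem only asks for \emph{some} satisfying $N$, so returning an earlier-found candidate is acceptable.

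The main obstacle is the algorithm's greedy, non-backtracking structure. \textbf{NRefine} prefers \textsc{NRefinePure} whenever \textbf{PRefine} succeeds. It also commits to the first $N_1$ found for a scrutinee $M$: if $N_2$ then fails, it moves to the next scrutinee rather than trying other $N_1$. So I must show that a success at a given scrutinee implies success at the first viable one. I would first attempt an invariant: if some satisfying normal term exists for a subproblem within the bounds, then \textbf{NRefine} on that subproblem does not return \fail{}. The induction would use the fact that the $X_i$ are determined by $M$ alone, so the two branch subproblems are independent of each other and the choice of $N_1$ cannot affect solvability of $N_2$. If this invariant breaks, for example because a \textbf{PRefine} success forces a different shape, I would weaken the claim to existence of some returned satisfying term, which is what the statement actually requires.
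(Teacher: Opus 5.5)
Your soundness and semantic-optimality arguments follow the paper's route. Your faithfulness lemma is the paper's claim that every term returned by \textbf{NRefine} or \textbf{PRefine} is related by the corresponding relation, and your treatment of the early return of $N_1$ and of the distinct-branches premise fills in details the paper skips. The early return does need one more step than context strengthening: an example-enlargement step from $X_1$ to $X$.

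The genuine gap is in bounded completeness. Your invariant ``if some satisfying normal term exists within the bounds, then \textbf{NRefine} does not return \fail{}'' is false. The culprit is not the greedy structure: the scrutinee loop does move on to the next $M$ when a branch fails, and the branch subproblems are independent. The culprit is the test $X_1=\emptyset \lor X_2=\emptyset$. Take $\UGamma = a:\theta_1+\theta_2,\, z:\theta_1$, $\tau=\theta_1$ and $X=\{[\inl{\theta_1+\theta_2}c/a,\ d/z]\mapsto c\}$ with $c\neq d$. The term $\match{a}{x_1.x_1}{x_2.z}$ is normal, contains no application, and satisfies $X$. Yet \textbf{PRefine} fails, because the only neutral of type $\theta_1$ is $z$. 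The only sum-typed scrutinee $a$ sends every example left, so it is skipped. Hence \textbf{NRefine} returns \fail{} for every $(n_M,n_\delta,d)$ and \textbf{NSynC} never returns. Your fallback (``weaken to existence of some returned satisfying term'') is the same claim. Picking a least $N^\ast$ cannot help either, since no term the algorithm is able to build satisfies $X$.

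The missing idea is to induct on derivations rather than on satisfying terms, which is what the paper does. A derivation of $\nref{X}{\UGamma}{\tau}{\top}{N}$ is matched by a trace of \textbf{NSynC} once the parameters are large enough, unless an earlier term is returned, which soundness makes acceptable. The premises $X_i\neq\emptyset$ of \textsc{NRefineMatch} are exactly what make this induction work: at the derivation's scrutinee both partitions are nonempty, so both branch calls succeed by the induction hypothesis.

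This route establishes part 3 only when some satisfying normal form is derivable from the \emph{given} $X$. Theorem~\ref{thm:relations-completeness} provides derivability from some example set, not from a prescribed one, and the example above shows the literal statement cannot be proved by any route. Even the derivation-based invariant needs nonempty example sets. On $X=\emptyset$, which is reachable via partial functions with empty domain, \textbf{PRefine} at a sum type always commits to the left injection. So $\pref{\emptyset}{y:\theta_2}{\theta_1+\theta_2}{\inr{\theta_1+\theta_2}y}$ holds, but the algorithm does not find it.
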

\begin{sketch}
    We assume (since we omit its definition) correctness of the successor function and take correctness of \textbf{MGuessSum} and \textbf{MGuessBase} as base cases for our induction. In each case, we can easily check that the manipulation of the neutral constraint is correct; $x_0 \sqsubset$ is equivalent to $\top$ because $x_0$ is the least neutral term (a free variable by \textsc{OrdOp}, and the first by \textsc{OrdCtx}), and the rest are trivial. This lets us show that iterating \textbf{MGuessSum} and \textbf{MGuessBase} is equivalent to enumerating the right-hand side of the $M$-guess relation, subject to the size limit.

    We can show by induction that every time \textbf{NRefine} or \textbf{PRefine} returns a term, this term satisfies the corresponding relation; \textbf{MGuessBase} and the unit type case give us base cases for \textbf{PRefine}. Soundness follows from Theorem \ref{thm:relations-soundness}. Semantic optimality requires the extra observation that the only terms considered by any function are those returned by recursive calls, and then follows from Theorem \ref{thm:relations-optimality}.
    
    We can also show by induction that any derivation of $\nref{X}{\UGamma}{\tau}{\top}{N}$ corresponds to a trace of \nsync{}, provided the search parameters are large enough, unless \nsync{} returns some other term $N'$ first, and in that case soundness implies that $\nref{X}{\UGamma}{\tau}{\top}{N'}$ regardless. Bounded completeness follows from Theorem \ref{thm:relations-completeness}. \qed
\end{sketch}

\section{Evaluation}\label{app:evaluation}

We implement \nsync{}---and re-implement \myth{} for comparison---in Haskell with GHC 9.6.7. We conduct our experiments on a 10-core 2.80 GHz Intel i9-10900 with 128GB of RAM, running Ubuntu 22.04.

We generate synthetic benchmarks by randomly generating, in order, a typing context, a goal type inhabitable in that context, a term of that type, and a set of examples for that term. For our preliminary tests, we use four base types, each of which has up to ten distinct values, and produce a typing context of six variables. Each type in the context has a maximum depth of four, and the goal type has a maximum depth of six; we also restrict the goal type to only include the base types that appear in the typing context, otherwise the goal type will be uninhabitable. Our synthesised terms have a maximum neutral size of 10, and a maximum match depth of 5. For each term, we produce 20 examples, and values that are partial functions are generated with six input-output pairs.

Under these parameters, we attempted to generate 200 benchmarks, of which 166 were generated successfully. On the other 32, our generator timed out before finding a term that inhabited the goal type. \myth{}*, our re-implementation, was able to solve 141 of the benchmarks, and \nsync{} solved 104.

We use Haskell's \texttt{criterion} package to time \myth{}* and \nsync{} on each benchmark, with a time limit of 60 seconds for each run. Additionally, we compute the total number of scrutinees enumerated by each algorithm in solving each benchmark. By measuring scrutinees, we can see the number of match terms checked by each algorithm, which we expect to be the majority of the work that \nsync{} eliminates.

Results are presented in the main paper.

\end{document}

\typeout{get arXiv to do 4 passes: Label(s) may have changed. Rerun}